\titleformat*{\section}{\large\bfseries}
\titleformat*{\subsection}{\it}
\newtheorem{theorem}{Theorem}
\newtheorem{lemma}{Lemma}
\newtheorem{prp}{Proposition}
\newtheorem{condition}{Condition}
\def\ga{{\gamma}}
\def\si{{\sigma}}
\def\th{{\theta}}
\def\beh{{\widehat \beta}}
\def\thh{{\widehat \th}}
\def\psih{{\widehat \psi}}
\def\sh{\widehat{s}}
\def\Ah{\widehat{A}}
\def\st{\widetilde{s}}
\def\tht{\widetilde{\theta}}
\title{{\bf Adaptively Robust Small Area Estimation: Balancing Robustness and Efficiency of Empirical Bayes Confidence Intervals}}
\date{}
\author{}
\begin{document}

\maketitle
\doublespacing

\vspace{-1.5cm}
\begin{center}
{\large Daisuke Kurisu$^{1,4}$, Takuya Ishihara$^{2,4}$ and Shonosuke Sugasawa$^{3,4}$}

\medskip
\medskip
\noindent
$^1$Graduate School of International Social Sciences, Yokohama National University\\
$^2$Graduate School of Economics and Management, Tohoku University \\
$^3$Center for Spatial Information Science, The University of Tokyo\\
$^4$Nospare Inc. 
\end{center}

\vspace{2mm}
\begin{center}
{\bf \large Abstract}
\end{center}
Empirical Bayes small area estimation based on the well-known Fay-Herriot model may produce unreliable estimates when outlying areas exist. Existing robust methods against outliers or model misspecification are generally inefficient when the assumed distribution is plausible. This paper proposes a simple modification of the standard empirical Bayes methods with adaptively balancing robustness and efficiency. The proposed method employs $\gamma$-divergence instead of the marginal log-likelihood and optimizes a tuning parameter controlling robustness by pursuing the efficiency of empirical Bayes confidence intervals for areal parameters. We provide an asymptotic theory of the proposed method under both the correct specification of the assumed distribution and the existence of outlying areas. We investigate the numerical performance of the proposed method through simulations and an application to small area estimation of average crime numbers.

\bigskip\noindent
{\bf Key words}: empirical Bayes; Fay-Herriot model; $\gamma$-divergence; Tweedie's formula

\newpage
\section{Introduction}
Direct survey estimators based only on area-specific samples are known to produce unacceptably large standard errors when the area-specific sample sizes are small. 
To improve the accuracy, empirical Bayes methods for small area estimation based on two-stage normal models are widely used to "borrow strength" from the information of other areas. 
For comprehensive overviews of small area estimation, see \cite{rao2015small}, and \cite{sugasawa2020small}.

A basic model for small area estimation is the two-stage normal hierarchical model known as the Fay-Herriot model \citep{fay1979estimates}, described as 
\begin{equation}\label{model}
y_i|\theta_i\sim N(\theta_i, D_i),  \ \ \ \ 
\theta_i\sim N(x_i^\top \beta, A), \ \ \ \ i=1,\ldots,m,
\end{equation}
where $y_i$ is the direct estimator of the small area parameter $\theta_i$ of interest, $D_i$ is known sampling variance, $x_i$ and $\beta$ are $p$-dimensional vectors of the auxiliary variables and regression coefficients, respectively, and $A$ is an unknown variance parameter. 
The empirical Bayes estimator of $\theta_i$ is useful when the areal parameter $\theta_i$ is well explained by the auxiliary information $x_i$.
However, the auxiliary information $x_i$ may not be useful for $\theta_i$ in all the areas, which could produce outlying areas for which the normality assumption for $\theta_i$ is not reasonable. 
In this case, the resulting estimator of $\theta_i$ can over-shrink the direct estimator $y_i$ in the outlying areas or be highly inefficient for non-outlying areas.

To circumvent the aforementioned problem, several robust methods have been proposed. 
In the context of small area estimation, \cite{sinha2009robust} and \cite{ghosh2008influence} proposed the use of Huber's $\psi$-function to modify the empirical Bayes estimator of $\theta_i$.
Furthermore, \cite{sugasawa2020robust} recently introduced a simple method by using density power divergence instead of the log-likelihood function of $y_i\sim N(x_i^\top\beta, A+D_i)$.
However, a serious bottleneck of the above approaches is that one needs to specify a tuning parameter adjusting the degree of robustness.
In this sense, the existing methods do not have adaptive nature; that is, there is the possibility that one might set an inappropriate degree of robustness, leading to inefficient estimation when the distributional assumption is plausible. 
Ideally, it is preferable to automatically optimize the tuning parameters depending on whether the distributional assumption is reasonable or not.
Moreover, the asymptotic theory of the existing robust methods is based on the assumption that the model is correctly specified, which is incompatible with the situation where the methods try to provide a remedy.

In this work, we introduce an adaptively robust approach to small area estimation. 
The key differences of the proposed method compared with the existing robust approaches are mainly three. 
First, we employ $\gamma$-divergence \citep{jones2001comparison}, which is known to have stronger robustness \citep[e.g.][]{fujisawa2008robust} than density power divergence \citep{basu1998robust} as used in \cite{sugasawa2020robust}.
Second, we derive robust empirical Bayes confidence intervals in addition to point estimators and propose a data-dependent way of selecting the tuning parameter of $\gamma$-divergence by pursuing the efficiency of the robust empirical Bayes confidence intervals. 
Third, we derive asymptotic properties of the proposed method under the existence of outlying areas. 
The second difference means that the proposed method is {\it adaptively robust}; that is, it reduces to the standard (non-robust) method when the distributional assumption is plausible. 
This property will be demonstrated in both theoretical and numerical ways in the subsequent sections. 
The third difference is novel in that this paper is the first one to study the theoretical properties of robust small area estimation methods under the existence of outlying areas.

We finally review related works beyond empirical Bayes small area estimation. 
As a general empirical Bayes approach, there are semiparametric approaches for prior distribution \citep{koenker2014convex, efron2016empirical}, but these methods do not directly handle auxiliary information as required in the small area estimation. 
More recently, \cite{armstrong2020robust} developed a semiparametric empirical Bayes confidence interval using only higher moment information, but their method can be conservative, especially when the assumed model is plausible.   
In small area estimation, \cite{datta1995robust} and \cite{tang2018modeling} proposed a fully Bayesian approach using heavy-tailed priors to take account of outlying areas, for which any asymptotic properties are not given, and Markov Chain Monte Carlo computation is required for estimation. 
Therefore, the proposed method seems advantageous in terms of simple computation, balancing robustness and efficiency, and valid theoretical properties under the existence of outlying areas.

\section{Adaptively robust small area estimation }
\label{sec:method}

\subsection{Fay-Herriot models and confidence intervals}
We consider the Fay-Herriot model (\ref{model}) and let $\psi = (\beta^{\top},A)^{\top}$ be the unknown parameter vector in the model (\ref{model}).
Then, the posterior (conditional) distribution of $\theta_i$ given $y_i$ and $\psi$ is $N(\tht_i, \st_i^2)$, where 
\begin{equation}\label{pos}
\tht_i(\psi)=y_i-\frac{D_i}{A+D_i}(y_i-x_i^\top \beta), \ \ \ \ \ \
\st_i^2(A)=\frac{AD_i}{A+D_i}.
\end{equation}
Since $y_i\sim N(x_i^\top\beta, A+D_i)$ under the model (\ref{model}), the unknown model parameter $\psi$ can be estimated as $\psih={\rm argmax}_{\psi}L(\psi)$, where $L(\psi)$ is the marginal likelihood expressed as 
\begin{equation}\label{ML}
L(\psi)=-\frac{1}{2}\sum_{i=1}^m\log\{2\pi(A+D_i)\} -\frac{1}{2}\sum_{i=1}^m\frac{(y_i-x_i^\top \beta)^2}{A+D_i}.
\end{equation}
Then, the empirical Bayes estimator of $\theta_i$ is obtained as $\thh_i=\tht_i(\psih)$.
Furthermore, for measuring uncertainty of $\thh_i$, we can construct the following empirical Bayes confidence intervals of $\theta_i$ with $1-\alpha$ nominal level \citep{cox1975prediction,morris1983parametric}:
\begin{equation}\label{CI}
\widehat{I}_{i,1-\alpha}^B = (\thh_i-z_{\alpha/2}\sh_i, \thh_i+z_{\alpha/2}\sh_i),
\end{equation}
where $\sh_i^2=\st_i^2(\Ah)$, and $z_{\alpha/2}$ is the upper $100(\alpha/2)\%$-quantile of the standard normal distribution. 
When the assumed normal model for $\theta_i$ in the model (\ref{model}) is approximately correct, both estimate $\thh_i$ and interval $\widehat{I}_{i,1-\alpha}^B$ give efficient estimation and inference results for $\theta_i$. 
However, in many applications, there may exist outlying areas having distinguished values of $\theta_i$ compared to the other areas, and the normality assumption is no more reasonable for such situations.

Before introducing the proposed method, we first reveal the properties of the (estimated) posterior means and variances given in (\ref{pos}) when there exist some representative outliers that cannot be captured by the assumed normal distribution for $\theta_i$.
To simplify the situation, we consider a situation where the absolute value of the first observation goes to infinity, namely, $|y_1|\to\infty$, while other observed values are fixed. 
Such framework is typically adopted to investigate the robustness of shrinkage estimation \citep[e.g.][]{carvalho2010horseshoe}. 
Under the framework, we can prove the following property:

\begin{prp}\label{Proposition:CI crude}
Assume that $m$ is fixed.
Then as $|y_1| \to \infty$, it holds that $\thh_i - y_i \to 0$ and $\sh_i^2 \to  D_i$ for all $i=1,\ldots,m$.
\end{prp}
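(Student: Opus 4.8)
The plan is to reduce everything to the explicit formulas in (\ref{pos}) and then control the maximum-likelihood estimator $\psih=(\beh^{\top},\Ah)^{\top}$ directly. Write $B_i=D_i/(\Ah+D_i)\in(0,1)$; then (\ref{pos}) gives $\thh_i-y_i=-B_i(y_i-x_i^{\top}\beh)$ and $\sh_i^2=\st_i^2(\Ah)=D_i(1-B_i)$. Hence the proposition follows once we show (i) $\Ah\to\infty$, and in fact $\Ah/\log|y_1|\to\infty$, which immediately yields $B_i\to 0$ and $\sh_i^2\to D_i$; and (ii) $B_i(y_i-x_i^{\top}\beh)\to 0$ for every $i$. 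A point worth noting at the outset is that $\beh$ itself need not remain bounded as $|y_1|\to\infty$ (already with a common mean it behaves like $y_1/2$), so the argument must bound the \emph{weighted residuals} $B_i(y_i-x_i^{\top}\beh)$ rather than $\beh$ alone.

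The first technical step is a lower bound on the maximized log-likelihood obtained by plugging a convenient feasible value into (\ref{ML}). Taking $\beta=0$ and $A=y_1^2$ (with $|y_1|\ge 1$) and using $\log\{2\pi(y_1^2+D_i)\}\le 2\log|y_1|+O(1)$ together with $y_i^2/(y_1^2+D_i)=O(1)$, one gets $L(0,y_1^2)\ge -m\log|y_1|-C$ for a constant $C$ not depending on $y_1$, hence $L(\psih)\ge -m\log|y_1|-C$. Substituting this into (\ref{ML}) and using that each quadratic term is nonnegative while $\log\{2\pi(\Ah+D_i)\}\ge\log(2\pi D_i)$ is bounded below (as $\Ah\ge 0$), we obtain the key inequality
\[
(y_j-x_j^{\top}\beh)^2\ \le\ (\Ah+D_j)\,\big(2m\log|y_1|+C'\big),\qquad j=1,\dots,m,
\]
for some constant $C'$.

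Next I would prove $\Ah/\log|y_1|\to\infty$ by contradiction. If $\Ah\le K\log|y_1|$ along a subsequence, the key inequality forces $|y_j-x_j^{\top}\beh|=O(\log|y_1|)$ for all $j$; since $y_j$ is fixed for $j\ge 2$ this gives $|x_j^{\top}\beh|=O(\log|y_1|)$ for $j\ge 2$, whereas $|x_1^{\top}\beh|\ge|y_1|-O(\log|y_1|)$ grows linearly in $|y_1|$. Writing $x_1=\sum_{i\ge 2}\lambda_i x_i$ --- here one invokes the mild design condition that deleting area $1$ does not lower the column rank of the design matrix --- gives $|x_1^{\top}\beh|\le\big(\sum_{i\ge 2}|\lambda_i|\big)\max_{i\ge 2}|x_i^{\top}\beh|=O(\log|y_1|)$, a contradiction. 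With $\Ah/\log|y_1|\to\infty$ in hand, $B_i\to 0$ and $\sh_i^2\to D_i$ follow at once, and for the point estimators the key inequality gives, for every $i$ (including $i=1$), $B_i|y_i-x_i^{\top}\beh|\le D_i\big((2m\log|y_1|+C')/(\Ah+D_i)\big)^{1/2}\to 0$, so $\thh_i-y_i\to 0$.

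The main obstacle is the step establishing the divergence (and the rate) of $\Ah$: a priori $\Ah$ could converge to a finite limit --- indeed it does in the degenerate case where $x_1$ is linearly independent of $x_2,\dots,x_m$, and there the conclusion genuinely fails, since area $1$ can then be fitted exactly without inflating $A$ --- so one really needs the likelihood lower bound together with the design condition to pin down both that $\Ah\to\infty$ and that it grows fast enough to annihilate the $O(\log|y_1|)$-sized residuals. The remaining manipulations are routine arithmetic with the closed forms in (\ref{pos}).
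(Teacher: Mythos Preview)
Your argument is correct and takes a genuinely different route from the paper. The paper works directly with the score equations (\ref{MLE-eq1})--(\ref{MLE-eq2}): it approximates $\beh$ by its leading term $\tilde\beta=(\sum_j x_jx_j^{\top}/(A+D_j))^{-1}x_1y_1/(A+D_1)$, substitutes this into the variance equation, and reads off $\Ah=O(|y_1|^2)$ by matching the orders of the two sides; the conclusions for $\thh_i$ and $\sh_i^2$ then follow from the quadratic growth of $\Ah$. You instead bound the \emph{value} of the log-likelihood by evaluating it at the test point $(\beta,A)=(0,y_1^2)$, turn this into the residual inequality $(y_j-x_j^{\top}\beh)^2\le(\Ah+D_j)(2m\log|y_1|+C')$, and run a contradiction argument to force $\Ah/\log|y_1|\to\infty$.

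The trade-off is clear. The paper's score-equation route yields the sharp rate $\Ah\asymp|y_1|^2$ with a short (if informal) calculation. Your likelihood-value route gives only the weaker rate $\Ah\gg\log|y_1|$, but it is self-contained, the final step $B_i|y_i-x_i^{\top}\beh|\le D_i\{(2m\log|y_1|+C')/(\Ah+D_i)\}^{1/2}\to 0$ handles all $i$ uniformly without any separate analysis of $\beh$, and---most importantly---it makes explicit the design condition $x_1\in\mathrm{span}\{x_2,\dots,x_m\}$ without which the statement is actually false. Your remark that in the degenerate case the first area can be fitted exactly so that $\Ah$ need not diverge (and then $\sh_i^2\not\to D_i$) is a point the paper's proof glosses over; the paper's order-matching argument is consistent with $\Ah$ staying bounded in that case, since the contribution of $y_1$ to the substituted variance equation simply vanishes.
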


The above result indicates that only a single outlying area affects the posterior means and variances of all the areas, and both posterior means and variances end up with the direct estimator $y_i$ and sampling variance $D_i$. 
This means that the shrinkage properties in (\ref{pos}) are disappeared by failing ``borrowing strength", leading to inefficient estimation and inference on non-outlying areas. 
In particular, the confidence interval (\ref{CI}) reduces to the direct interval
\begin{equation}\label{crude}
\widehat{I}^D_{i,1-\alpha} = (y_i-z_{\alpha/2}\sqrt{D_i},y_i+z_{\alpha/2}\sqrt{D_i}).
\end{equation}
Ideally, we should be able to construct efficient estimates and confidence intervals by ``borrowing strength" for non-outlying areas, while the direct estimate $y_i$ and interval $\widehat{I}^D_{i,1-\alpha}$  should be used for outlying areas.
However, the practical difficulty is that we do not know which area is outlying.
Our aim of this study is to provide an adaptive way to give robust estimation and inference of $\theta_i$ depending on the existence of outlying areas.

\subsection{Robust empirical Bayes confidence intervals}
We first note that the empirical Bayes confidence interval (\ref{CI}) is characterized by $\tht_i$ and $\st_i^2$ in (\ref{pos}). 
According to the Tweedie's formula \citep[e.g.][]{efron2011tweedie}, they can be derived by the form of marginal distribution:  
\begin{equation}\label{tw}
\tht_i=y_i+D_i\frac{d}{dy_i}\log m(y_i), \ \ \ \ \ \ 
\st_i^2=D_i+D_i^2\frac{d^2}{dy_i^2}\log m(y_i),
\end{equation}
where $m(y_i)$ is the marginal density of $y_i\sim N(x_i^\top\beta, A+D_i)$.
Moreover, the estimator $\psih$ is obtained through the marginal distribution of $y_i$, which indicates that the marginal likelihood of $y_i$ completely characterizes both estimates and confidence intervals.

When some outlying areas are included (i.e., the normal model for $\theta_i$ is misspecified), the marginal distribution of $y_i$ is also misspecified, which is the main reason for the undesirable property as demonstrated in Proposition \ref{Proposition:CI crude}. 
To overcome the problem, we replace the marginal log-likelihood with a generalized likelihood based on $\gamma$-divergence \citep{jones2001comparison} given by 
\begin{equation}\label{GD}
\begin{split}
\mathcal{D}_{\gamma}(y_i;\psi) 
&\equiv \frac{1}{\gamma}\phi(y_i;x_i^\top\beta, A+D_i)^{\gamma}\left\{\int \phi(t;x_i^\top\beta, A+D_i)^{1+\gamma} dt\right\}^{-\gamma/(1+\gamma)}\\
&=\frac{1}{\gamma}\phi(y_i;x_i^\top\beta, A+D_i)^{\gamma}c_{\gamma}(A),
\end{split}
\end{equation}
where $c_{\gamma}(A)=\{2\pi(A+D_i)\}^{\gamma^2/2(1+\gamma)}$, and $\phi(y;\mu, \sigma^2)$ is the density function of the normal distribution $N(\mu,\sigma^2)$ and $\gamma>0$ is a tuning parameter related to the robustness for which a selection procedure will be discussed later. 
Note that $\lim_{\gamma\to 0}\{\mathcal{D}_{\gamma}(y_i;\psi)-\gamma^{-1}\}=\log m(y_i)$, that is, the objective function is very similar to the marginal log-likelihood under small $\gamma$.
Hence, the objective function (\ref{GD}) is a natural generalization of the marginal log-likelihood. 
We note that $\gamma$-divergence has stronger robustness \citep[e.g.][]{fujisawa2008robust} against the existence of outlying areas, which plays an important role in the theoretical justification of the proposed confidence intervals in Section \ref{sec:theory}.

Our main idea is to replace $\log m(y_i)$ with $\mathcal{D}_{\gamma}(y_i)$ in the Tweedie's formula (\ref{tw}) to obtain robust posterior mean and variance.
A straightforward calculation shows that 
\begin{equation}\label{tw2}
\begin{split}
\tht_i^{(\gamma)}
&\equiv y_i+D_i\frac{d}{dy_i}\mathcal{D}_{\gamma}(y_i;\psi)=y_i-w_{\gamma}(y_i)\frac{D_i}{A+D_i}(y_i-x_i^\top\beta), \\
\st_i^{2(\gamma)}
&\equiv D_i+D_i^2\frac{d^2}{dy_i^2}\mathcal{D}_{\gamma}(y_i;\psi)=D_i+w_{\gamma}(y_i)\frac{D_i^2}{(A+D_i)^2}\left\{\gamma(y_i-x_i^\top\beta)^2- (A+D_i)\right\},
\end{split}
\end{equation}
where $w_{\gamma}(y_i)=\phi(y_i;x_i^\top\beta, A+D_i)^{\gamma}c_{\gamma}(A)$.
It can be easily checked that $\tht_i^{(\gamma)}\to \tht_i$ and $\st_i^{2(\gamma)}\to \st_i^2$ as $\gamma\to 0$, so that the expression in (\ref{tw}) can be seen as a robust generalization of (\ref{pos}).
We define the robust estimator of $\psi$ as 
\begin{equation}\label{est}
\psih^{(\gamma)}={\rm argmax}_{\psi}\sum_{i=1}^m\mathcal{D}_{\gamma}(y_i; \psi),
\end{equation}
noting that $\psih^{(\gamma)}$ reduces to the maximum likelihood estimator as $\gamma\to 0$. 
Since the above objective function is an analytic form of $\psi$, it can be easily optimized, for example, by the Newton-Raphson algorithm. 
Then, we propose a robust empirical Bayes confidence interval as 
\begin{equation}\label{EB-RCI}
\widehat{I}_{i,1-\alpha}^{(\gamma)} = (\thh_i^{(\gamma)}-z_{\alpha/2}\sh_i^{(\gamma)}, \thh_i^{(\gamma)}+z_{\alpha/2}\sh_i^{(\gamma)}),
\end{equation} 
where $\thh_i^{(\gamma)}$ and $\sh_i^{(\gamma)}$ are obtained by replacing $\psi$ with $\psih^{(\gamma)}$ in $\tht_i^{(\gamma)}$ and $\st_i^{(\gamma)}$, respectively. 
A notable property of the new interval (\ref{EB-RCI}) is that not only $\tht_i^{(\gamma)}$ but also $\st_i^{(\gamma)}$ depends on the observed value $y_i$, which results in the following quite different property from one given in Proposition \ref{Proposition:CI crude}.

\begin{prp}\label{Proposition:CI proposed}
Let $\gamma \in (0,1]$. Assume that $m$ is fixed.
As $|y_1|\to\infty$, it holds that $\thh_1^{(\gamma)} - y_1 \to 0$ and $\sh_1^{(\gamma)}\to \sqrt{D_1}$ while $\thh_i^{(\gamma)} \to \thh_{i,\dagger}^{(\gamma)}$ and $\sh_i^{(\gamma)} \to \sh_{i,\dagger}^{(\gamma)}$ for $i\neq 1$, where $\thh_{i,\dagger}^{(\gamma)}$ and $\sh_{i,\dagger}^{(\gamma)}$ are defined by replacing $\psih^{(\gamma)}$ with $\psih^{(\gamma)}_{\dagger}$, which is the maximizer of $m^{-1}\sum_{i=2}^{m}\mathcal{D}_{\gamma}(y_i;\psi)$. 
\end{prp}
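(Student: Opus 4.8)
The plan is to reduce the entire statement to the single fact that $\psih^{(\gamma)}\to\psih^{(\gamma)}_{\dagger}$ as $|y_1|\to\infty$ (taking for granted, as the wording of the statement does, that $\psih^{(\gamma)}_{\dagger}$ is a well-defined, unique, interior maximizer), and then to read off the four displayed limits by substituting the relevant estimators into the explicit expressions in (\ref{tw2}) for $\tht^{(\gamma)}_i$ and $\st^{2(\gamma)}_i$.

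\textbf{Step 1: the role of $\gamma>0$.} First I would record the elementary fact that drives robustness: since $\gamma>0$, for $(\mu,\sigma^2)$ in any compact subset of $\Re\times(0,\infty)$ the power $\phi(y_1;\mu,\sigma^2)^{\gamma}$ is a bounded negative power of $2\pi\sigma^2$ times $\exp\{-\gamma(y_1-\mu)^2/(2\sigma^2)\}$, so $|y_1|^{k}\phi(y_1;\mu,\sigma^2)^{\gamma}\to0$ as $|y_1|\to\infty$, uniformly over that compact set, for every $k\ge0$. Consequently $\mathcal{D}_{\gamma}(y_1;\psi)\to0$ uniformly for $\psi$ in any compact set, and therefore $m^{-1}\sum_{i=1}^{m}\mathcal{D}_{\gamma}(y_i;\psi)\to m^{-1}\sum_{i=2}^{m}\mathcal{D}_{\gamma}(y_i;\psi)$ uniformly on compacta; the right-hand side is precisely the objective defining $\psih^{(\gamma)}_{\dagger}$.

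\textbf{Step 2: convergence of the maximizer.} Given Step 1, $\psih^{(\gamma)}\to\psih^{(\gamma)}_{\dagger}$ follows by the standard argmax argument (uniform convergence of the objective on compact sets together with uniqueness of the limiting maximizer), once the maximizers $\psih^{(\gamma)}$ are known to remain in one fixed compact set for all large $|y_1|$. Tightness in the $A$-direction is immediate, since each $\mathcal{D}_{\gamma}(y_i;\psi)$ is bounded by a constant multiple of a negative power of $2\pi(A+D_i)$, so the objective vanishes as $A\to\infty$, whereas it is bounded below by $m^{-1}\sum_{i=2}^{m}\mathcal{D}_{\gamma}(y_i;\psih^{(\gamma)}_{\dagger})+o(1)>0$. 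Tightness in $\beta$ with $A$ bounded is the only delicate point: one must exclude that the maximizer ``chases'' the diverging $y_1$ by letting $x_1^{\top}\beta$ track $y_1$ (which forces $\|\beta\|\to\infty$), and this is where the paper's standing design/identification conditions enter, to show that the resulting loss in $m^{-1}\sum_{i=2}^{m}\mathcal{D}_{\gamma}(y_i;\psi)$ cannot be compensated by the bounded gain $m^{-1}\mathcal{D}_{\gamma}(y_1;\psi)$; when the parameter space for $\psi$ is taken compact, this point is moot. This non-escape property is the main obstacle; the rest is bookkeeping.

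\textbf{Step 3: reading off the limits.} Write $\psih^{(\gamma)}=((\beh^{(\gamma)})^{\top},\Ah^{(\gamma)})^{\top}\to\psih^{(\gamma)}_{\dagger}$. For $i\ne1$, (\ref{tw2}) exhibits $\tht^{(\gamma)}_i$ and $\st^{2(\gamma)}_i$ as continuous functions of $\psi$ with $y_i$ held fixed (the factors $A+D_i\ge D_i>0$ are harmless), so evaluating at $\psih^{(\gamma)}$ and letting $|y_1|\to\infty$ gives $\thh^{(\gamma)}_i\to\thh^{(\gamma)}_{i,\dagger}$ and $\st^{2(\gamma)}_i\to(\sh^{(\gamma)}_{i,\dagger})^{2}$, hence $\sh^{(\gamma)}_i\to\sh^{(\gamma)}_{i,\dagger}$. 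For $i=1$, the weight $w_{\gamma}(y_1)$ evaluated at $\psih^{(\gamma)}$ is a negative power of $2\pi(\Ah^{(\gamma)}+D_1)$ times $\exp\{-\gamma(y_1-x_1^{\top}\beh^{(\gamma)})^{2}/(2(\Ah^{(\gamma)}+D_1))\}$; since $\Ah^{(\gamma)}$ and $x_1^{\top}\beh^{(\gamma)}$ converge, Step 1 yields $w_{\gamma}(y_1)\,|y_1-x_1^{\top}\beh^{(\gamma)}|^{k}\to0$ for $k=0,1,2$. Substituting into (\ref{tw2}) gives $\thh^{(\gamma)}_1-y_1=-w_{\gamma}(y_1)\{D_1/(\Ah^{(\gamma)}+D_1)\}(y_1-x_1^{\top}\beh^{(\gamma)})\to0$ and $\st^{2(\gamma)}_1=D_1+w_{\gamma}(y_1)\{D_1^{2}/(\Ah^{(\gamma)}+D_1)^{2}\}\{\gamma(y_1-x_1^{\top}\beh^{(\gamma)})^{2}-(\Ah^{(\gamma)}+D_1)\}\to D_1$; since this limit equals $D_1>0$, the square root $\sh^{(\gamma)}_1\to\sqrt{D_1}$, which completes the argument.
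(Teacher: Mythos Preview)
Your proof is correct and follows essentially the same route as the paper: uniform convergence of $\mathcal{D}_{\gamma}(y_1;\psi)\to 0$ on compacta via the Gaussian tail (using $\gamma>0$), then an argmax argument to obtain $\psih^{(\gamma)}\to\psih^{(\gamma)}_{\dagger}$, followed by direct substitution into (\ref{tw2}). If anything, you are more careful than the paper about the non-escape of $\psih^{(\gamma)}$ into a fixed compact set, which the paper simply asserts; your observation that a compact parameter space (the paper's Condition~3) resolves this is exactly the intended reading.
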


\medskip
Proposition \ref{Proposition:CI proposed} implies two adaptive properties of the proposed confidence interval. 
For non-outlying signals, the proposed intervals produce efficient intervals by borrowing the strength of information from other non-outlying signals. 
In other words, the proposed interval can automatically identify the outlying subjects and successfully eliminate such data to provide a stable interval for non-outlying observations. 
The proposed interval reduces to the direct interval for outlying signals, which makes sense because there would be no information to borrow strength from non-outlying observations. 
Such property can be recognized as an extension of ``tail robustness" \citep{carvalho2010horseshoe} for the shrinkage estimation. 
Note that Proposition \ref{Proposition:CI proposed} merely demonstrates finite sample properties of the proposed confidence intervals as a function of the $m$-dimensional observation, $(y_1,\ldots,y_m)$, and the asymptotic coverage properties will be investigated in Section \ref{sec:theory}.

\subsection{Adaptation of the tuning parameter}\label{sec: gamma-selection}
The choice of $\gamma$ in (\ref{EB-RCI}) controls the trade-off between efficiency for non-outliers and robustness against outliers.
We here propose selecting a suitable value of $\gamma$ to obtain intervals with minimum posterior variances. 
Let $\Gamma\equiv \{\gamma_1,\ldots,\gamma_M\}$ be a set of candidate values for $\gamma$, where $0=\gamma_1<\gamma_2<\cdots<\gamma_M=1$. 
Then, we propose selecting $\gamma$ minimizing the total robust variances, that is, we define the optimal $\gamma$ as
\begin{equation}\label{gamma-selection}
\gamma_{op}={\rm argmin}_{\gamma\in\Gamma}\sum_{i=1}^ma_i\sh_i^{2(\gamma)},
\end{equation}
where $a_i$ is a fixed weight, for example, $a_i=1/D_i$.
We set $a_i=1$ in our theoretical argument for simplicity, but the extension is quite straightforward. 
We also note that the optimal value $\gamma_{op}$ provides information on whether the assumed normal distribution is plausible or not in terms of the efficiency of confidence intervals.

\section{Asymptotic properties}\label{sec:theory}

In Section \ref{sec:method} we discussed the properties of the proposed confidence intervals as a function of direct estimates, $y_1,\ldots,y_m$ with fixed $m$.
In this section, we will discuss the asymptotic properties of the proposed method under large numbers of areas.
In particular, we reveal selection performance for $\gamma$ based on (\ref{gamma-selection}) and asymptotic coverage probability of the robust confidence intervals under both correct specification and the existence of outlying areas. 
It should be noted that we assume that the first stage model, $y_i|\theta_i\sim N(\theta_i, D_i)$, is correctly specified, which is very likely in practice because $y_i$ is the direct estimator of $\theta_i$ and the normality assumption is approximately true. 
Throughout this section, we assume the following regularity conditions: 

\begin{condition}
$\{(y_i,x_i,D_i)\}$ is a sequence of independent and identically distributed random vectors and there exist positive constants $D_{L}$ and $D_{U}$ such that $0< D_{L} \leq D_i \leq D_{U}<\infty$ for $i=1,\ldots,m$.
\end{condition}

\begin{condition}
We have that (i) $x_i \in \mathcal{X}$, where $\mathcal{X} \subset \mathbb{R}^p$ is a bounded set and (ii) $E[x_i x_i^\top]$ a positive-definite matrix.
\end{condition}

\begin{condition}
$\Psi \subset \mathbb{R}^p \times (0,\infty)$ is a compact parameter space of $\psi_{\ast} \equiv  (\beta_{\ast},A_{\ast})$, where $\psi_{\ast}$ is the true parameter, and $A_{\ast} + D_L > (2\pi)^{-1}e^{-3}$.
\end{condition}

Uniform boundedness of $D_i$ in Condition 1 and positive definiteness in Condition 2 are widely adopted in the context of small area estimation \citep[e.g.][]{datta2005measuring,yoshimori2014second}.
The assumption that $D_i$ is a random variable in Condition 1 reflects the practical situation where $D_i$'s are estimated variance of $y_i$'s. 
Condition 3 is satisfied when the lower bound of $D_i$ is not very small.

\subsection{Properties under correct specification}\label{sec:correct}
We first consider the situation that the assumed model of $\theta_i$ is correctly specified, that is, there are no outlying areas, which is the standard framework for small area estimation.
The first result is the property of selecting $\gamma$ via (\ref{gamma-selection}). 

\begin{theorem}
\label{gamma_op}
Suppose that $\theta_i \sim N(x_i^{\top}\beta_{\ast}, A_{\ast})$, that is, the model (\ref{model}) is correctly specified. 
Then, under Conditions 1-3, it holds that $P(\gamma_{op} = 0) \to 1$ as $m\to\infty$.
\end{theorem}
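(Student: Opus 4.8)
The plan is to reduce the claim to a pairwise comparison over the fixed finite grid $\Gamma$, and then handle each comparison with a uniform law of large numbers for the $\gamma$-divergence estimator together with an elementary Gaussian computation. Since $\Gamma=\{\gamma_1,\dots,\gamma_M\}$ has fixed cardinality with $\gamma_1=0$, the event $\{\gamma_{op}=0\}$ contains $\bigcap_{j=2}^{M}\big\{\frac1m\sum_{i=1}^m\sh_i^{2(0)}<\frac1m\sum_{i=1}^m\sh_i^{2(\gamma_j)}\big\}$, so by a union bound it suffices to show that for each \emph{fixed} $\gamma\in(0,1]$,
\[
\P\Big(\frac1m\sum_{i=1}^m\sh_i^{2(0)}<\frac1m\sum_{i=1}^m\sh_i^{2(\gamma)}\Big)\to1 \quad (m\to\infty).
\]
I would obtain this from two ingredients: (a) a deterministic limit $\frac1m\sum_{i=1}^m\sh_i^{2(\gamma)}\xrightarrow{p}V(\gamma)$ valid for every fixed $\gamma\in[0,1]$, and (b) the strict inequality $V(0)<V(\gamma)$ for all $\gamma\in(0,1]$.

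For (a), the first step is consistency, $\psih^{(\gamma)}\xrightarrow{p}\psi_{\ast}$. By Conditions 1 and 3 the quantity $A+D_i$ lies in a fixed compact subset of $(0,\infty)$, so for fixed $\gamma>0$ the map $\psi\mapsto\mathcal{D}_{\gamma}(y_1;\psi)=\gamma^{-1}w_\gamma(y_1)$ is continuous with a bounded envelope on the compact parameter space $\Psi$, and a uniform law of large numbers gives $\sup_{\psi\in\Psi}\big|\frac1m\sum_i\mathcal{D}_{\gamma}(y_i;\psi)-Q_\gamma(\psi)\big|\xrightarrow{p}0$, where $Q_\gamma(\psi):=\E[\mathcal{D}_{\gamma}(y_1;\psi)]$. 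Because the first-stage model is correctly specified, conditionally on $(x_1,D_1)$ we have $y_1\sim N(x_1^\top\beta_{\ast},A_{\ast}+D_1)$, and $\E[\mathcal{D}_{\gamma}(y_1;\psi)\mid x_1,D_1]$ equals a positive constant times a strictly decreasing function of the $\gamma$-cross-entropy between this true density and the model density $N(x_1^\top\beta,A+D_1)$; by the information inequality for $\gamma$-divergence it is therefore maximized exactly when $x_1^\top\beta=x_1^\top\beta_{\ast}$ and $A=A_{\ast}$. Integrating over $(x_1,D_1)$ and using positive definiteness of $\E[x_1x_1^\top]$ (Condition 2) identifies $\psi_{\ast}$ as the \emph{unique} maximizer of $Q_\gamma$, which yields consistency by standard M-estimation arguments (for $\gamma=0$, $\psih^{(0)}$ is the MLE and consistency is classical). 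The second step: for each fixed $\gamma\in[0,1]$ the map $\psi\mapsto\st_i^{2(\gamma)}(\psi)$ is continuous with a bounded envelope on $\mathcal{X}\times\Psi$ — for $\gamma>0$ because $w_\gamma(y_i)$ and $w_\gamma(y_i)(y_i-x_i^\top\beta)^2$ are controlled by the Gaussian factor $\exp\{-\gamma(y_i-x_i^\top\beta)^2/2(A+D_i)\}$ inside $w_\gamma$, and for $\gamma=0$ trivially since $\st_i^{2(0)}$ does not involve $y_i$ — so another uniform law of large numbers combined with the consistency above yields $\frac1m\sum_i\sh_i^{2(\gamma)}=\frac1m\sum_i\st_i^{2(\gamma)}(\psih^{(\gamma)})\xrightarrow{p}V(\gamma):=\E[\st_1^{2(\gamma)}(\psi_{\ast})]$.

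For (b), I would compute $V(\gamma)$ by conditioning on $(x_1,D_1)$ and setting $\sigma^2=A_{\ast}+D_1$, $u=y_1-x_1^\top\beta_{\ast}\sim N(0,\sigma^2)$. Using $\E[e^{-\gamma u^2/2\sigma^2}]=(1+\gamma)^{-1/2}$ and $\E[u^2e^{-\gamma u^2/2\sigma^2}]=\sigma^2(1+\gamma)^{-3/2}$, a direct calculation gives
\[
\E\big[\st_1^{2(\gamma)}(\psi_{\ast})\mid x_1,D_1\big]=D_1-\frac{D_1^2}{A_{\ast}+D_1}\,\{2\pi(A_{\ast}+D_1)\}^{-\gamma/2(1+\gamma)}(1+\gamma)^{-3/2},
\]
which reduces at $\gamma=0$ to the Bayes posterior variance $A_{\ast}D_1/(A_{\ast}+D_1)$. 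Hence
\[
V(\gamma)-V(0)=\E\Big[\frac{D_1^2}{A_{\ast}+D_1}\Big(1-\{2\pi(A_{\ast}+D_1)\}^{-\gamma/2(1+\gamma)}(1+\gamma)^{-3/2}\Big)\Big],
\]
and since $D_1\ge D_L>0$ it is enough to check that the bracketed expression is strictly positive for $\gamma\in(0,1]$, i.e., $\frac{\gamma}{2(1+\gamma)}\log\{2\pi(A_{\ast}+D_1)\}+\frac32\log(1+\gamma)>0$. This is exactly where Condition 3 enters: $2\pi(A_{\ast}+D_1)\ge2\pi(A_{\ast}+D_L)>e^{-3}$, so the left-hand side is bounded below by $\frac32\big(\log(1+\gamma)-\frac{\gamma}{1+\gamma}\big)$, a function that vanishes at $\gamma=0$ and has strictly positive derivative $\gamma/(1+\gamma)^2$ on $(0,1]$, hence is strictly positive. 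Combining (a), (b), and the union bound completes the proof.

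The moment computation in (b) is routine; the delicate part is the M-estimation step in (a) — verifying the uniform law of large numbers and, above all, the \emph{uniqueness} of $\psi_{\ast}$ as the maximizer of $Q_\gamma$ over all of $\Psi$ for the random-covariate design. Condition 2 is what forces $\beta=\beta_{\ast}$ once the conditional densities agree, and Condition 3 is what keeps the $\gamma$-power normalizing constant $c_\gamma(A)$ well-behaved on $\Psi$ so that the information inequality applies uniformly; these are the points that require care rather than the algebra.
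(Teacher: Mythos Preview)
Your proof is correct and follows essentially the same route as the paper: consistency of $\psih^{(\gamma)}$ via an M-estimation argument with a uniform law of large numbers, convergence of the selection criterion to an explicit Gaussian expectation, and the use of Condition~3 to verify the key inequality. The only cosmetic differences are that the paper differentiates $Q(\gamma;\psi_{\ast})$ in $\gamma$ to obtain monotonicity (whereas you compare $V(\gamma)$ to $V(0)$ directly, which is all that is needed), and your identification of $\psi_{\ast}$ as the unique population maximizer via the $\gamma$-divergence information inequality is in fact cleaner than the paper's first-order-condition check.
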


Theorem \ref{gamma_op} indicates that if the distribution of $\theta_i$ is correctly specified, then the optimal $\gamma_{op}$ converges in probability to $0$ as $m \to \infty$. 
This implies that our method is asymptotically consistent with the standard parametric empirical Bayes confidence intervals when there are no outlying areas, which guarantees the efficiency of the proposed robust method.

We next consider asymptotic coverage probabilities of the proposed interval when the assumed model is correctly specified.

\begin{theorem}\label{coverage-conv}
When the model (\ref{model}) is correctly specified, it holds under Conditions 1-3 that
 \[
 P\left(\th_i \in \widehat{I}_{i,1-\alpha}^{(\gamma_{op})}\right) = 1-\alpha + o(1)\ \text{as $m \to \infty$}.
 \]
\end{theorem}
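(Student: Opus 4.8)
\textbf{Proof proposal for Theorem \ref{coverage-conv}.}

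The plan is to reduce the coverage statement for the data-driven interval $\widehat{I}_{i,1-\alpha}^{(\gamma_{op})}$ to the coverage statement for the fixed choice $\gamma=0$, i.e.\ the classical parametric empirical Bayes interval $\widehat{I}_{i,1-\alpha}^B$ in (\ref{CI}), whose asymptotic validity is well known \citep{cox1975prediction,morris1983parametric}. First I would invoke Theorem \ref{gamma_op}: under correct specification $P(\gamma_{op}=0)\to 1$. Writing
\begin{equation*}
P\!\left(\th_i\in\widehat{I}_{i,1-\alpha}^{(\gamma_{op})}\right)
= P\!\left(\th_i\in\widehat{I}_{i,1-\alpha}^{(\gamma_{op})},\ \gamma_{op}=0\right)
+ P\!\left(\th_i\in\widehat{I}_{i,1-\alpha}^{(\gamma_{op})},\ \gamma_{op}\neq 0\right),
\end{equation*}
the second term is bounded by $P(\gamma_{op}\neq 0)=o(1)$, and on the event $\{\gamma_{op}=0\}$ the interval equals $\widehat{I}_{i,1-\alpha}^B$ exactly, since $\tht_i^{(\gamma)}\to\tht_i$, $\st_i^{2(\gamma)}\to\st_i^2$ as $\gamma\to 0$ and, more importantly, $\gamma_1=0$ is literally an element of the candidate grid $\Gamma$ so $\psih^{(0)}=\psih$. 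Hence it remains to show $P(\th_i\in\widehat{I}_{i,1-\alpha}^B)=1-\alpha+o(1)$.

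For that classical piece I would argue as follows. Conditionally on $(x_i,D_i,\psi_\ast)$ we have $(\th_i-\tht_i(\psi_\ast))/\st_i(A_\ast)\sim N(0,1)$ exactly, so with the \emph{true} parameter the interval $(\tht_i(\psi_\ast)\pm z_{\alpha/2}\st_i(A_\ast))$ has coverage exactly $1-\alpha$. The task is then to control the effect of plugging in $\psih$ in place of $\psi_\ast$. Under Conditions 1--3 the maximum likelihood estimator $\psih$ is $\sqrt{m}$-consistent for $\psi_\ast$ (standard M-estimation / Fay--Herriot theory), so $\thh_i-\tht_i(\psi_\ast)=x_i^\top(\beh-\be_\ast)\,D_i/(A_\ast+D_i)+O_P(m^{-1/2})=O_P(m^{-1/2})$ and $\sh_i^2-\st_i^2(A_\ast)=O_P(m^{-1/2})$, uniformly over the bounded ranges of $(x_i,D_i)$ granted by Conditions 1--2. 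A first-order expansion of the Gaussian CDF then gives
\begin{equation*}
P\!\left(\th_i\in\widehat{I}_{i,1-\alpha}^B\ \middle|\ x_i,D_i,\psih\right)
=\Phi\!\left(\frac{\thh_i-\tht_i(\psi_\ast)+z_{\alpha/2}\sh_i}{\st_i(A_\ast)}\right)
-\Phi\!\left(\frac{\thh_i-\tht_i(\psi_\ast)-z_{\alpha/2}\sh_i}{\st_i(A_\ast)}\right)
= 1-\alpha+o_P(1),
\end{equation*}
and taking expectations with the bounded convergence theorem yields the unconditional statement.

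The main obstacle I anticipate is not the classical $\gamma=0$ coverage — that is textbook — but making the reduction airtight, namely checking that on $\{\gamma_{op}=0\}$ every quantity entering the interval genuinely collapses to its non-robust counterpart with no residual dependence on the grid $\Gamma$ or on the selection mechanism; this is where the fact that $0\in\Gamma$ is used crucially. A secondary technical point is that Theorem \ref{gamma_op} must be applicable here, so I would cite it directly and note that Conditions 1--3 are exactly its hypotheses, so no additional assumptions are needed. Finally, one should be mildly careful that the $o(1)$ is uniform enough over $i$ if a uniform-in-$i$ statement is desired, but as stated the theorem is for fixed $i$, so the pointwise expansion above suffices.
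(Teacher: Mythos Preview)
Your reduction to $\gamma=0$ via Theorem~\ref{gamma_op} and the subsequent argument that $P(\th_i\in\widehat{I}_{i,1-\alpha}^B)=1-\alpha+o(1)$ mirrors the paper's proof exactly in overall structure. The paper executes the second step a little differently: rather than conditioning and invoking bounded convergence, it restricts to the intersection of $\{\|\psih-\psi_\ast\|\le\epsilon_m\}$ and $\{|y_i-x_i^\top\beta_\ast|\le\epsilon_m^{-1/2}\}$, obtains deterministic bounds for $|\thh_i-\tht_i(\psi_\ast)|$ and $|\sh_i-\st_i(A_\ast)|$ on that event, and then applies Nazarov's inequality (in dimension one this is just the Lipschitz bound on the Gaussian CDF) to the posterior law $\th_i\mid y_i,x_i,D_i\sim N(\tht_i(\psi_\ast),\st_i^2(A_\ast))$. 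Both routes are valid; the paper's is more explicit about the rate of the $o(1)$ term, while yours is more concise.

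One notational slip worth correcting: your displayed conditional probability conditions on $(x_i,D_i,\psih)$ but not on $y_i$, yet $\thh_i$ on the right-hand side depends on $y_i$, so the expression is not well-defined as written. Condition instead on the full data $(y_1,\ldots,y_m,x_1,\ldots,x_m,D_1,\ldots,D_m)$: then $\thh_i,\sh_i$ are measurable with respect to the conditioning, and the posterior law $\th_i\mid\text{data}\sim N(\tht_i(\psi_\ast),\st_i^2(A_\ast))$ (which holds because $\th_i$ is conditionally independent of the other areas' data given $(y_i,x_i,D_i)$) makes your Gaussian-CDF formula correct and the bounded-convergence step goes through.
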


Theorem \ref{coverage-conv} implies that if the model is correctly specified, then our robust empirical Bayes confidence intervals using the optimal $\gamma_{op}$ selected via (\ref{gamma-selection}) have asymptotically correct coverage probabilities. 
Although we have not shown it in detail, the coverage accuracy of the proposed interval is roughly $O(m^{-1})$, which is the same rate as the standard confidence interval $\widehat{I}_{i,1-\alpha}^B$.
The coverage accuracy can improved by using, for instance, parametric bootstrap \citep[e.g.][]{chatterjee2008parametric,hall2006parametric} to make the coverage error $O(m^{-3/2})$ when $m$ is not very large.

\subsection{Properties under existence of outlying areas}\label{sec:cont}

We next consider a realistic situation under existence of outlying areas.
To this end, we suppose that $\theta_i$ are drawn from the density given by 
\begin{equation}\label{huber}
 f_{\psi_{\ast}}(\theta_i) = (1-\omega)\phi(\theta_i;x_i^\top \beta_{\ast},A_{\ast}) + \omega \delta(\theta_i). 
\end{equation}
where $\omega \in [0,1/2)$ is a contamination ratio and $\delta$ is a contamination distribution.
The data generating process is known as the Huber's contamination model \citep{huber1964robust,huber1965robust}.
We define the following quantity: 
\begin{align}\label{rho}
\rho &\equiv \max_{k\in \{0,1,2\}}\sup_{D \in [D_{L},D_{U}]} \sup_{x\in \mathcal{X}}\int f_{\delta}(y)\phi(y;x^\top \beta_{\ast},A_{\ast}+D)^{\gamma}|y-x^\top \beta_{\ast}|^k dy,
\end{align}
where $f_{\delta}(y)=\int \phi(y;\theta,D)\delta(\theta)d\theta$
is the marginal distribution of $y_i$ when $\theta_i\sim \delta$. 
Note that (\ref{rho}) measures separability between the genuine distribution $N(x_i^\top \beta_{\ast},A_{\ast})$ and contamination distribution $\delta$; $\rho$ is almost 0 when the two distributions are well separated, that is, the density $f_{\delta}$ mostly lies on the tail of the density $\phi(y_i;x_i^\top \beta_{\ast},A_{\ast}+D_i)$. 
Let $E_{f_{\psi}}[\cdot]$ be the expectation with respect to the model $y_i|\theta_i,D_i \sim N(\theta_i,D_i)$ that $\theta_i$ are drawn from the density  $f_{\psi}(\theta_i)$, and let $E_{\psi}[\cdot]$ and  $E_{\delta}[\cdot]$ denote the expectation with respect to $\phi(\theta_i;x_i^\top\beta,A+D_i)$ and $f_{\delta}$. 

Define $\partial_{\psi}\mathcal{D}_{\gamma}(y_i;\psi) = \partial\mathcal{D}_{\gamma}(y_i;\psi)/\partial\psi$ (see (\ref{GD}) for the definition of $\mathcal{D}_{\gamma}(y_i;\psi)$). Assume that $\gamma>0$ and let $\psi_{\dagger}^{(\gamma)} = (\beta_{\gamma,\dagger}, A_{\gamma, \dagger})$ be a root of $E_{f_{\psi_{\ast}}}[\partial_{\psi} \mathcal{D}_{\gamma}(y_1;\psi)]=0$, and let $\psih_{m}^{(\gamma)} = (\beh_{\gamma},\Ah_{\gamma})$ be a root of the estimating equation $m^{-1}\sum_{i=1}^{m}\partial_{\psi}\mathcal{D}_{\gamma}(y_i;\psi)=0$. As a result of Lemmas S4 and S5 in the Supplementary Material, we can show 
\begin{align*}
    \psih_{m}^{(\gamma)} &= \psi_{\dagger}^{(\gamma)} + O_{p}(m^{-1/2})
    = \psi_{\ast} + O(\omega \rho) + O_{p}(m^{-1/2}),
\end{align*}
which plays an important role in the following theorems.

We first show the selection performance of $\gamma$ under (\ref{huber}).  

\begin{theorem}\label{gamma-conv-contami-2}
Define 
$$
Q(\ga;\psi) \equiv  E_{\psi_{\ast}}\left[ w_{\ga}(y_i) \frac{D_i^2}{(A+D_i)^2} \left\{ \ga (y_i - x_i^\top \beta)^2 - (A + D_i) \right\}\right].
$$
Assume that the first element of $x_i$ is the constant term, $\theta_i\sim f_{\psi_{\ast}}(\theta_i)$ with $\omega \in (0,1/2)$, $E\left[ \frac{x_i x_i^\top}{a + D_i} \right]$ is a nonsingular matrix for any $a > 0$, $\mu_{\delta} \equiv  \int \theta_i \delta(\theta_i) d\theta_i < \infty$, $\tau_{\delta}^2 \equiv \int (\theta_i-\mu_{\delta})^2 \delta(\theta_i) d\theta_i < \infty$, and $E[(x_i^\top \beta_{\ast} - \mu_{\delta})^2 | D_i] \geq A_{\ast} + D_U$ for all $i$. 
Further, suppose that Conditions (a)-(e) in Lemma S4 hold and Conditions 1-3 are satisfied. 
If there exists $\ga_{\ast} \in \Gamma \setminus \{0\}$ such that $Q(0;\psi_{\ast}) > (1-\omega) Q(\ga_{\ast};\psi_{\ast})$ and $\rho(\psi_{\ast},\ga_{\ast},\delta) \equiv \rho$ is sufficiently small, then we have $P(\gamma_{op}>0)\to 1$ as $m\to\infty$. 
\end{theorem}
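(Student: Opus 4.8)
\textbf{Proof plan for Theorem \ref{gamma-conv-contami-2}.}

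The plan is to reduce the event $\{\gamma_{op} > 0\}$ to a comparison between two specific values of the objective $\sum_i \sh_i^{2(\gamma)}$, evaluated at $\gamma = 0$ and at the candidate $\gamma_\ast$, and then to show that this comparison goes the right way with probability tending to one. First I would recall from (\ref{tw2}) that $\sum_{i=1}^m \sh_i^{2(\gamma)} = \sum_{i=1}^m D_i + \sum_{i=1}^m w_\gamma(y_i)\frac{D_i^2}{(\Ah_\gamma+D_i)^2}\{\gamma(y_i - x_i^\top\beh_\gamma)^2 - (\Ah_\gamma + D_i)\}$, so that since $\sum_i D_i$ is common to all $\gamma$, minimizing $\sum_i \sh_i^{2(\gamma)}$ over $\Gamma$ is equivalent to minimizing $G_m(\gamma) \equiv m^{-1}\sum_{i=1}^m w_\gamma(y_i)\frac{D_i^2}{(\Ah_\gamma+D_i)^2}\{\gamma(y_i-x_i^\top\beh_\gamma)^2 - (\Ah_\gamma + D_i)\}$. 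Note $G_m(0) = -m^{-1}\sum_i D_i^2/(\Ah_0 + D_i)$, which is negative, so the non-robust choice $\gamma = 0$ is already ``good'' in the sense of giving negative (hence small) total variance correction; the point is to show some $\gamma_\ast > 0$ does strictly better.

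Next I would establish a uniform law of large numbers: for each fixed $\gamma \in \Gamma$, $G_m(\gamma) \to G(\gamma)$ in probability, where $G(\gamma)$ is the population version with $y_i|\theta_i,D_i\sim N(\theta_i,D_i)$, $\theta_i\sim f_{\psi_\ast}$, and $\psi$ replaced by its probability limit $\psi_\dagger^{(\gamma)}$. Here I would invoke the displayed expansion $\psih_m^{(\gamma)} = \psi_\dagger^{(\gamma)} + O_p(m^{-1/2}) = \psi_\ast + O(\omega\rho) + O_p(m^{-1/2})$ that precedes the theorem, together with Conditions 1--3 (compactness, boundedness of $D_i$, boundedness of $\mathcal{X}$) to control the suprema; the integrand $w_\gamma(y)\frac{D^2}{(A+D)^2}\{\gamma(y-x^\top\beta)^2-(A+D)\}$ is a bounded, smooth function of $\psi$ on $\Psi$ uniformly in $(y,x,D)$ once multiplied by the Gaussian-like weight $w_\gamma$, so a standard Glivenko--Cantelli / stochastic-equicontinuity argument applies. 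Since $\Gamma$ is finite, this gives joint convergence $\max_{\gamma\in\Gamma}|G_m(\gamma) - G(\gamma)| \to 0$ in probability.

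The third step is to compute $G(\gamma)$ at $\gamma = 0$ and $\gamma = \gamma_\ast$ in terms of the quantity $Q(\gamma;\psi)$ defined in the statement. Writing the mixture $f_{\psi_\ast} = (1-\omega)\phi(\cdot;x_i^\top\beta_\ast,A_\ast) + \omega\delta$, the expectation $E_{f_{\psi_\ast}}[\cdot]$ of the relevant integrand splits as $(1-\omega)E_{\psi_\ast}[\cdot] + \omega E_\delta[\cdot]$. With $\psi$ set to its limit $\psi_\dagger^{(\gamma)} = \psi_\ast + O(\omega\rho)$, the first piece is $(1-\omega)Q(\gamma;\psi_\ast)$ up to an $O(\omega\rho)$ perturbation (controlled by continuity of $Q$ in $\psi$, using the moment conditions $\mu_\delta,\tau_\delta^2<\infty$ and Condition 3 to bound derivatives), and the second piece is $\omega$ times an integral that is bounded in absolute value by a constant multiple of $\rho$ by the very definition (\ref{rho}) of $\rho$ (the $k=0,1,2$ terms there exactly cover $w_\gamma$, $w_\gamma|y-x^\top\beta_\ast|$, and $w_\gamma(y-x^\top\beta_\ast)^2$). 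At $\gamma = 0$ we have $w_0 \equiv 1$ and $Q(0;\psi_\ast) = -E_{\psi_\ast}[D_i^2/(A_\ast+D_i)] < 0$, while the $\omega$-contamination contribution at $\gamma=0$ is $-\omega E_\delta[D_i^2/(A_\dagger^{(0)}+D_i)]$, and here the hypothesis $E[(x_i^\top\beta_\ast - \mu_\delta)^2 \mid D_i] \ge A_\ast + D_U$ is what prevents the contamination from accidentally making $G(0)$ too negative (it forces $A_\dagger^{(0)}$, the robustless variance estimate, to be inflated by the outliers, which is precisely the over-shrinkage failure mode). Assembling, $G(0) = (1-\omega)Q(0;\psi_\ast) + O(\omega) \cdot(\text{controlled, non-negative-ish term})$ and $G(\gamma_\ast) = (1-\omega)Q(\gamma_\ast;\psi_\ast) + O(\omega\rho)$. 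The hypothesis $Q(0;\psi_\ast) > (1-\omega)Q(\gamma_\ast;\psi_\ast)$ — wait, more carefully, the hypothesis is stated as $Q(0;\psi_\ast) > (1-\omega)Q(\gamma_\ast;\psi_\ast)$, which together with $\rho$ sufficiently small yields $G(0) > G(\gamma_\ast)$ strictly; I would make the constants in the $O(\omega\rho)$ and $O(\omega)$ bounds explicit enough to absorb them into the strict gap by taking $\rho$ small.

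Finally, combining the uniform convergence with the strict population inequality $G(0) > G(\gamma_\ast)$ and finiteness of $\Gamma$: on the event that $\max_\gamma|G_m(\gamma) - G(\gamma)|$ is smaller than half the gap $G(0) - G(\gamma_\ast)$, we have $G_m(0) > G_m(\gamma_\ast) \ge \min_{\gamma\in\Gamma}G_m(\gamma)$, so the minimizer $\gamma_{op}$ cannot equal $0$; hence $P(\gamma_{op} > 0) \to 1$.

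\textbf{Main obstacle.} The delicate part is the third step, and specifically \emph{tracking the contamination bias in $\psi_\dagger^{(\gamma)}$ through the population objective at $\gamma = 0$ versus $\gamma = \gamma_\ast$ with the right signs}. At $\gamma = 0$ the estimator has no robustness, so $\psi_\dagger^{(0)}$ is pulled toward the contaminated first and second moments and the term $Q(0;\cdot)$ must be evaluated at this biased point — here the condition $E[(x_i^\top\beta_\ast - \mu_\delta)^2 \mid D_i] \ge A_\ast + D_U$ is doing real work and I would need to verify carefully that it forces $\Ah_0 \to A_\dagger^{(0)}$ to be large enough that $G(0)$ stays bounded away (from below) by a quantity that the hypothesis $Q(0;\psi_\ast) > (1-\omega)Q(\gamma_\ast;\psi_\ast)$ can beat; this requires an auxiliary lemma (presumably Lemma S4/S5 in the Supplement) identifying $\psi_\dagger^{(0)}$ as essentially the moment-matching solution under the mixture. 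At $\gamma = \gamma_\ast > 0$, by contrast, the $\gamma$-divergence robustness means the contamination enters only through $\rho$, which is assumed small, so that side is comparatively easy. Keeping the two asymptotic regimes — ``$\gamma = 0$: fully contaminated'' and ``$\gamma = \gamma_\ast$: $\rho$-negligible contamination'' — cleanly separated, and ensuring the error terms are uniform over the finite grid $\Gamma$, is where most of the care goes.
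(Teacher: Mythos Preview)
Your overall architecture matches the paper's proof: reduce to comparing the empirical criterion at $\gamma=0$ versus a fixed $\gamma_\ast>0$, pass to population limits $\tilde Q(\gamma;\psi_\dagger^{(\gamma)})$ via a uniform LLN, handle the $\gamma_\ast$ side by the mixture split plus the $\rho$-bound, handle the $\gamma=0$ side by analyzing the inflated MLE, and finish with the strict population gap. The $\gamma_\ast$ computation, the finite-grid uniform convergence, and the concluding argmin argument are all exactly as in the paper.

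The one place you go off track is the ``Assembling'' line for $\gamma=0$. The decomposition $G(0)=(1-\omega)Q(0;\psi_\ast)+O(\omega)\cdot(\text{something})$ is not what is needed and does not come from anywhere: at $\gamma=0$ the integrand $-D_i^2/(A+D_i)$ does not depend on $y_i$, so the mixture split over $y_i$ is vacuous, and moreover the expansion $\psi_\dagger^{(\gamma)}=\psi_\ast+O(\omega\rho)$ is a $\gamma$-divergence robustness statement valid only for $\gamma>0$ --- at $\gamma=0$ the estimator has no robustness and $\psi_\dagger^{(0)}$ can be far from $\psi_\ast$. What the paper actually does (and what you correctly flag in your ``Main obstacle'' paragraph) is to solve the mixture score equations for $\psi_\dagger^{(0)}=(\beta_\dagger,A_\dagger)$ explicitly: using the constant-term assumption one gets $\beta_\dagger=(1-\omega)\beta_\ast+\omega v_{\mu_\delta}$, and then the variance equation together with the separation hypothesis $E[(x_i^\top\beta_\ast-\mu_\delta)^2\mid D_i]\ge A_\ast+D_U$ forces the strict lower bound $A_\dagger\ge A_\ast+\omega D_L$. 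Monotonicity of $a\mapsto -E[D_i^2/(a+D_i)]$ then gives directly
\[
G(0)=\tilde Q\bigl(0;\psi_\dagger^{(0)}\bigr)=-E\!\left[\frac{D_i^2}{A_\dagger+D_i}\right]\;>\;-E\!\left[\frac{D_i^2}{A_\ast+D_i}\right]=Q(0;\psi_\ast),
\]
with a strict gap uniform over all $\delta$ satisfying the separation hypothesis. Chaining this with the assumed inequality $Q(0;\psi_\ast)>(1-\omega)Q(\gamma_\ast;\psi_\ast)$ and your correct expansion $G(\gamma_\ast)=(1-\omega)Q(\gamma_\ast;\psi_\ast)+O(\omega\rho)$ yields $G(0)>G(\gamma_\ast)+O(\omega\rho)$, and taking $\rho$ small absorbs the remainder. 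So drop the $(1-\omega)Q(0;\psi_\ast)$ decomposition entirely and replace it with the explicit moment-equation computation of $A_\dagger$; everything else in your plan stands.
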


From Theorem \ref{gamma-conv-contami-2}, the probability that a positive value $\gamma_{op}$ is selected approaches 1 as $m\to\infty$ if the true distribution of $\theta_i$ is misspecified. The condition $Q(0;\psi_{\ast}) > (1-\omega) Q(\ga_{\ast};\psi_{\ast})$ implies that if $\omega$ is small, then $\gamma_{\ast}$ should be close to $0$. Practically, it is sufficient to take the number of candidates $M$ in $\Gamma$ sufficiently large. In order to make $\rho(\psi_{\ast},\gamma_{\ast},\delta)$ sufficiently small when $\gamma_{\ast}$ is close to $0$, the distribution $\delta$ should be well-separated from $N(x_i^\top\beta_{\ast},A_{\ast})$.
Note that the condition $(x_i^\top \beta_{\ast} - \mu_{\delta})^2 \geq A_{\ast} + D_U$ is also related to the separability.

Next, we provide the asymptotic property of the proposed confidence intervals.
In the following argument, we set $\rho_{\psi_{\ast}} \equiv \rho$ to indicate that $\rho$ depends on $\psi_{\ast} = (\beta_{\ast},A_{\ast})$.

\begin{theorem}\label{contami-coverage}
Suppose that Conditions (a)-(e) in Lemma S4 hold and Conditions 1-3 are satisfied.
Let $0<\underline{\gamma}<\overline{\gamma} < 1$. 
Assume that $\theta_i\sim f_{\psi_{\ast}}(\theta_i)$, $\omega \rho_{\psi_{\ast}} = O(m^{-\zeta})$ as $m \to \infty$ for some $\zeta \in (0,1/2)$, and $E_{\delta}[y_i^2]<\infty$. Then for $\gamma \in [\underline{\gamma},\overline{\gamma}]$, we have
\begin{align*}
P\left(\th_i \in \widehat{I}_{i,1-\alpha}^{(\gamma)}\right) = 1-\alpha + O\left((\omega\rho_{\psi_{\ast}})^{1/4}+\gamma\right)\ \text{as $m \to \infty$}.
 \end{align*} 
 \end{theorem}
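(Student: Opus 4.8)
The plan is to express the coverage probability $P(\theta_i \in \widehat I_{i,1-\alpha}^{(\gamma)})$ in terms of the studentized quantity $(\theta_i - \widehat\theta_i^{(\gamma)})/\widehat s_i^{(\gamma)}$ and show that, uniformly over the relevant range of $\psi$-estimates, this ratio is approximately standard normal up to an error of order $O((\omega\rho_{\psi_\ast})^{1/4} + \gamma)$. First I would condition on $(x_i, D_i)$ and work with the exact conditional law of $\theta_i$ given $y_i$ under the first-stage model $y_i \mid \theta_i \sim N(\theta_i, D_i)$, using that the first stage is correctly specified. The key algebraic step is to rewrite $\theta_i - \widehat\theta_i^{(\gamma)}$ as $(\theta_i - \widetilde\theta_i(\psi_\ast)) + (\widetilde\theta_i(\psi_\ast) - \widetilde\theta_i^{(\gamma)}(\psi_\ast)) + (\widetilde\theta_i^{(\gamma)}(\psi_\ast) - \widehat\theta_i^{(\gamma)})$: the first term is the ``oracle'' Bayes residual, the second is the bias introduced by using the $\gamma$-divergence weighting instead of the true posterior mean, and the third is the estimation error in $\psi$. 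I would similarly decompose $\widehat s_i^{(\gamma)2}$ around $\widetilde s_i^2(A_\ast)$.

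Next I would control each piece. For the estimation-error term, the display preceding the theorem gives $\widehat\psi_m^{(\gamma)} = \psi_\ast + O(\omega\rho) + O_p(m^{-1/2})$, and since $\omega\rho_{\psi_\ast} = O(m^{-\zeta})$ with $\zeta \in (0,1/2)$ the deterministic part dominates; a first-order Taylor expansion of $\widetilde\theta_i^{(\gamma)}$ and $\widetilde s_i^{(\gamma)}$ in $\psi$ (their smoothness follows from the analytic form in (\ref{tw2}), together with Conditions 1--3 which keep $A + D_i$ and the weights $w_\gamma$ bounded) transfers this into an $O(\omega\rho_{\psi_\ast}) + O_p(m^{-1/2})$ perturbation of numerator and denominator. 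For the $\gamma$-bias term, I would Taylor-expand $w_\gamma(y_i) = \phi(y_i; x_i^\top\beta_\ast, A_\ast+D_i)^\gamma c_\gamma(A_\ast)$ around $\gamma = 0$ — using $w_0 \equiv 1$ and $\partial_\gamma w_\gamma|_{\gamma=0} = \log\phi - \tfrac12\log\{2\pi(A_\ast+D_i)\}$, which has bounded moments under the genuine normal component — to get that $\widetilde\theta_i^{(\gamma)}(\psi_\ast) - \widetilde\theta_i(\psi_\ast)$ and $\widetilde s_i^{(\gamma)2}(\psi_\ast) - \widetilde s_i^2(A_\ast)$ are $O(\gamma)$ with remainders integrable against the genuine component; the contamination component contributes an extra $O(\omega\rho_{\psi_\ast})$ by the very definition of $\rho$ in (\ref{rho}), which bounds exactly the integrals $\int f_\delta(y)\phi(y;\cdot)^\gamma|y - x^\top\beta_\ast|^k\,dy$ for $k=0,1,2$ that appear here.

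Then I would assemble: conditionally on $y_i$ (and on the event that $\theta_i$ came from the genuine component, which has probability $1-\omega$), $\theta_i - \widetilde\theta_i(\psi_\ast)$ is exactly $N(0, \widetilde s_i^2(A_\ast))$, so $(\theta_i - \widehat\theta_i^{(\gamma)})/\widehat s_i^{(\gamma)}$ equals $N(0,1)$ plus a perturbation that is $O_p(\omega\rho_{\psi_\ast} + \gamma + m^{-1/2})$ in probability. A Berry--Esseen-type or direct normal-CDF Lipschitz argument converts this into $|P(\theta_i \in \widehat I_{i,1-\alpha}^{(\gamma)}) - (1-\alpha)| \le C(\omega\rho_{\psi_\ast} + \gamma + m^{-1/2})$ on the genuine component. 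The contamination component, occurring with probability $\omega \le \omega\rho_{\psi_\ast}^{0}$... here one must be slightly careful — the crude bound $\omega$ need not be $o(1)$ faster than the stated rate, so instead I would absorb the contamination event's contribution into the $O((\omega\rho_{\psi_\ast})^{1/4})$ term by a truncation argument: split according to whether $|y_i - x_i^\top\beta_\ast|$ exceeds a threshold $T_m \to \infty$ chosen so that $\omega\, P_\delta(|y_i - x_i^\top\beta_\ast| \le T_m)$ is balanced against the tail probability $P_\ast(|y_i - x_i^\top\beta_\ast| > T_m)$; optimizing the threshold is what produces the fourth-root exponent.

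The main obstacle I anticipate is precisely this last point: getting the sharp $(\omega\rho_{\psi_\ast})^{1/4}$ rate rather than a cruder $\omega + \rho_{\psi_\ast}$ bound. One has to track how a small mass $\omega$ of contamination, located where $f_\delta$ overlaps the tail of the genuine marginal (quantified by $\rho$), simultaneously biases $\widehat\psi_m^{(\gamma)}$, biases the conditional mean/variance formulas, and contaminates the event on which $\theta_i$ is being covered — and then trade these off against the Gaussian tail via the free threshold parameter. Making the truncation argument uniform in $i$ and in $\gamma \in [\underline\gamma, \overline\gamma]$, while invoking the bounded-moment Conditions (a)--(e) of Lemma S4 to keep all the remainder terms integrable, is where the real work lies; the rest is bookkeeping around the explicit formulas in (\ref{tw2}) and (\ref{GD}).
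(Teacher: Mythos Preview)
Your decomposition of $\theta_i-\widehat\theta_i^{(\gamma)}$ into oracle Bayes residual, $\gamma$-bias, and estimation error is the same as the paper's, and your handling of the first two pieces (Taylor in $\gamma$ to get $O(\gamma)$ on the genuine component, Taylor in $\psi$ combined with Lemmas~S4--S5 for the third) is correct. The normal-CDF Lipschitz step you mention is exactly Nazarov's inequality in the paper. The genuine-component half of your argument would go through.

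The real gap is the contamination component. You propose to ``absorb'' it into the error term by truncating $|y_i-x_i^\top\beta_\ast|$ and balancing $\omega\,P_\delta(\text{small})$ against $P_\ast(\text{large})$, but $\omega\in(0,1/2)$ is a \emph{fixed} positive constant, so no such balancing can produce an $o(1)$ bound: the contamination mass beyond your threshold is still of order $\omega$, and you have said nothing about coverage there. The paper's mechanism is qualitatively different and is precisely the tail-robustness of Proposition~\ref{Proposition:CI proposed}. Conditional on $B_i=1$, one splits on $\Omega_{2,m}=\{\phi(y_i;x_i^\top\beta_\ast,A_\ast+D_i)^\gamma|y_i-x_i^\top\beta_\ast|^k\le\rho_{\psi_\ast}^{1/2},\ k=0,1,2\}$; Markov with the definition of $\rho$ gives $P(\Omega_{2,m}^c\mid B_i=1)\le\rho_{\psi_\ast}^{1/2}$, and on $\Omega_{2,m}$ the weight $w_\gamma(y_i)$ is so small that $\widetilde\theta_i^{(\gamma)}\approx y_i=\theta_i+\sqrt{D_i}Z_i$ and $\widetilde s_i^{2(\gamma)}\approx D_i$, so the robust interval degenerates to the direct interval and covers with probability $1-\alpha+O(\rho_{\psi_\ast}^{1/2})$ by first-stage normality alone. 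You must show coverage \emph{holds} on the contamination component, not bound its probability.

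You also mislocate the $1/4$ exponent. It does not arise from the contamination split; it comes from the estimation-error truncation. The paper takes $\widetilde\epsilon_m=m^{-\zeta/2}+(\omega\rho_{\psi_\ast})^{1/2}$ so that $P(\|\widehat\psi^{(\gamma)}-\psi_\ast\|>\widetilde\epsilon_m)$ is small (Chebyshev plus Lemma~S5), then truncates on $\{|y_i-x_i^\top\beta_\ast|\le\widetilde\epsilon_m^{-1/2}\}$: the tail has probability $O(\widetilde\epsilon_m^{1/2})$ by Markov, while on the good set the perturbations $|\widehat\theta_i^{(\gamma)}-\widetilde\theta_i^{(\gamma)}|$ and $|\widehat s_i^{(\gamma)}-\widetilde s_i^{(\gamma)}|$ are $O(\widetilde\epsilon_m\cdot\widetilde\epsilon_m^{-1/2})=O(\widetilde\epsilon_m^{1/2})$, since the formulas~(\ref{tw2}) carry factors of $(y_i-x_i^\top\beta_\ast)$. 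The square-root loss here, applied to $\widetilde\epsilon_m\sim(\omega\rho_{\psi_\ast})^{1/2}$, is what yields $(\omega\rho_{\psi_\ast})^{1/4}$.
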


The term $O((\omega \rho_{\psi_{\ast}})^{1/4})$ in the approximation error of the confidence interval $\widehat{I}_{i,1-\alpha}^{(\gamma)}$ comes from the bias for estimating $\psi_{\ast}$ under the contaminated model $\theta \sim f_{\psi_{\ast}}(\theta_i)$. 
If $\rho_{\psi_{\ast}}$ is small, that is, the contamination density $f_{\delta}(y_i)$ mostly lies on the tail of the density $\phi(y_i;x_i^\top \beta_{\ast}, A_{\ast}+D_i)$, then the term $O((\omega \rho_{\psi_{\ast}})^{1/4})$ can be small even if $\omega$ is not small. In this case, the observations could be heavily contaminated. When $\rho_{\psi_{\ast}}$ is not small, $O((\omega \rho_{\psi_{\ast}})^{1/4})$ can be small if the contamination ratio $\omega$ is small.  
The term $O(\gamma)$ reflects the effect of robustification using the $\gamma$-divergence for constructing $\widehat{I}_{i,1-\alpha}^{(\gamma)}$ instead of the standard confidence interval $\widehat{I}_{i,1-\alpha}^B$.
Hence, the term $O(\gamma)$ is likely to be positive since the posterior variance $\st_i^{(\gamma)}$ increases with $\gamma$ and the resulting interval has a wider length than the parametric interval. 
This means that the term $O(\gamma)$ is thought to work in the direction of increasing the coverage probability, that is, at least from the practical point of view, the coverage probability can be
\begin{align*}
    P\left(\theta_i \in \widehat{I}_{i,1-\alpha}^{(\gamma)}\right) &\geq (1-\omega)(1-\alpha) + \omega (1-\alpha) - O((\omega \rho_{\psi_{\ast}})^{1/4})\\
    & = 1-\alpha + O((\omega \rho_{\psi_{\ast}})^{1/4}).
\end{align*}
We give more detailed discussion on this issue in Section~S3 in the Supplementary Material.

\section{Simulation study}\label{sec:sim}
We compare the performance of the proposed robust method with existing methods through simulation studies. 
We consider the two-stage model: 
$$
y_i\sim N(\theta_i, D_i),  \ \ \ \theta_i=\beta_0+\beta_1 x_{i1} + \beta_2x_{i2} + \sqrt{A}u_i, \ \ \ i=1,\ldots,m,
$$
where $m=100$, $(\beta_0, \beta_1, \beta_2)=(0, -1, 1)$ and two choices of $A\in\{1, 0.5\}$.
The two auxiliary variables $x_{i1}$ and $x_{i2}$ are generated from the standard normal distribution and Bernoulli distribution with $0.5$ success probability, respectively. 
Throughout this study, we divide $m$ areas into five groups containing an equal number of areas and set the same value of $D_i$ within each group. The $D_i$ pattern of the groups is $(0.2, 0.6, 1.0, 1.4, 2.0)$.
We considered the following five scenarios for the distribution of $u_i$:
\begin{align*}
&\text{(i)} \ \ u_i\sim N(0, 1), \ \ \ \ \ 
\text{(ii)} \ \ u_i\sim LN(0, 1), \ \ \ \ 
\text{(iii)} \ \  u_i\sim C, \\
&\text{(iv)}  \ \ u_i\sim 0.95N(0, 1)+0.05N(10, 1),  \ \ \ \ \ 
\text{(v)} \ \   u_i\sim 0.9N(0, 1)+0.1N(10, 1), 
\end{align*}
where $LN(a,b)$ denotes the log-normal distribution with log-mean $a$ and log-variance $b$, and $C$ denotes the Cauchy distribution. 
Remark that scenario (i) corresponds to the situation where there are no outliers, and the assumed model is correctly specified to check the efficiency loss of the proposed method compared with the standard empirical Bayes confidence intervals. 
On the other hand, the underlying distribution of $\theta_i$ is quite different from a normal distribution in the other scenarios. 
In particular, signals in scenarios (iv) and (v) produce some outlying values, which would make the standard empirical Bayes approach inefficient, as shown in Proposition ~\ref{Proposition:CI crude}.

We first investigate the performance in terms of point estimation. 
In addition to the proposed method, denoted by GD, we employ the standard empirical Bayes (EB) estimator, robust estimator based on density power divergence (DPD) by \cite{sugasawa2020robust} and two robust estimators using Huber's $\psi$-function proposed by \cite{sinha2009robust} (denoted by SR) and \cite{ghosh2008influence} (denoted by GEB). 
We set $5\%$ inflation mean squared errors to choose the tuning parameter, and used the rule-of-thumb value $K=1.345$ for the Huber's $\psi$-function. Using $R=2000$ simulated datasets, we computed mean squared errors, $(mR)^{-1}\sum_{i=1}^m\sum_{r=1}^R (\thh_i^{(r)}-\th_i^{(r)})^2$, where $\thh_i^{(r)}$ and $\th_i^{(r)}$ are estimated and true values of $\th_i$ in the $r$th replication. 
The results are presented in Table~\ref{tab:sim-mse}.
First, under scenario (i), the optimal $\gamma$ in the proposed GD method was selected $0$ for all the replications, and the performance of GD and EB are identical, which is consistent with our theoretical result given in Theorem \ref{gamma_op}. 
On the other hand, the other robust methods tend to be inefficient under scenario (i). 
On the other hand, once outlying areas are included or the normal distribution is severely misspecified as in scenarios (ii)$\sim$(v), the optimal value of $\gamma$ tends to be strictly positive, as proved in Theorem \ref{gamma-conv-contami-2}.
It is also observed that EB performs poorly while the other robust methods provide more accurate estimates than EB.
In particular, the proposed GD method attains the minimum value in all the scenarios.

We next investigate the performance of empirical Bayes confidence intervals. 
We computed $95\%$ confidence intervals of $\theta_i$ using the proposed GD method and the standard empirical Bayes (EB), and the $95\%$ direct confidence interval (DR) given in (\ref{crude}). 
Using 2000 simulated datasets, we computed coverage probability, and the average length of $95\%$ confidence intervals averaged over all the subjects. 
The results are shown in Table \ref{tab:sim-cp}, showing that the three methods provide coverage probability around the nominal level (95\%) in all the scenarios. 
However, the average length of EB is close to that of the direct interval when the assumed model is misspecified in scenarios except for scenario (i). 
On the other hand, GD provides a more efficient interval length than EB even when the model is severely misspecified.

\begin{table}[!htbp]
\caption{Mean squared errors (MSE) of small area estimators under five scenarios of $\theta_i$ and two scenarios of $A\in \{0.5, 1\}$. 
\label{tab:sim-mse}
}
{\small 
\begin{center}
\begin{tabular}{ccccccccccccccccc}
 \hline
 && \multicolumn{5}{c}{Scenario ($A=1$)} && \multicolumn{5}{c}{Scenario ($A=0.5$)}\\
  method &  & (i) & (ii) & (iii) & (iv) & (v) &  &  (i) & (ii) & (iii) & (iv) & (v) \\
 \hline
GD &  & 0.487 & 0.688 & 0.896 & 0.620 & 0.674 &  & 0.328 & 0.598 & 0.799 & 0.465 & 0.525 \\
($\gamma_{\rm opt}$) &  & (0.00) &( 0.18) & (0.22) & (0.14) & (0.19) &  & (0.00) & (0.10) & (0.20) & (0.11) & (0.15) \\
EB &  & 0.487 & 0.945 & 1.063 & 0.906 & 0.974 &  & 0.328 & 0.789 & 1.029 & 0.894 & 0.968 \\
DPD &  & 0.504 & 0.726 & 0.912 & 0.691 & 0.850 &  & 0.338 & 0.598 & 0.821 & 0.582 & 0.740 \\
SR &  & 0.496 & 0.882 & 1.040 & 0.804 & 0.927 &  & 0.332 & 0.720 & 0.987 & 0.771 & 0.914 \\
GEB &  & 0.524 & 0.881 & 1.039 & 0.802 & 0.923 &  & 0.374 & 0.697 & 0.986 & 0.767 & 0.909 \\
\hline
\end{tabular}
\end{center}
}
\end{table}

\begin{table}[!htbp]
\caption{ 
Empirical coverage probability (CP) and average length (AL) of the three $95\%$ confidence intervals under five scenarios of $\theta_i$ and two scenarios of $A\in \{0.5, 1\}$.
\label{tab:sim-cp}
}
{\small 
\begin{center}
\begin{tabular}{ccccccccccccccc}
 \hline
 &&& \multicolumn{5}{c}{Scenario ($A=1$)} && \multicolumn{5}{c}{Scenario ($A=0.5$)}\\
 & method &  & (i) & (ii) & (iii) & (iv) & (v) &  &  (i) & (ii) & (iii) & (iv) & (v) \\
 \hline
 & EB &  & 93.6 & 94.8 & 95.0 & 94.6 & 94.8 &  & 92.0 & 94.8 & 95.0 & 94.6 & 94.8 \\
CP & GD &  & 93.6 & 95.8 & 95.5 & 96.3 & 96.2 &  & 92.0 & 95.6 & 95.7 & 96.6 & 96.6 \\
 & DR &  & 95.0 & 95.0 & 95.1 & 95.0 & 95.0 &  & 95.0 & 95.0 & 95.1 & 95.0 & 95.0 \\
 \hline
 & EB &  & 2.54 & 3.57 & 3.82 & 3.47 & 3.62 &  & 2.03 & 3.22 & 3.78 & 3.44 & 3.61 \\
AL & GD &  & 2.54 & 3.20 & 3.59 & 3.11 & 3.23 &  & 2.03 & 2.95 & 3.42 & 2.74 & 2.90 \\
 & DR &  & 3.86 & 3.86 & 3.86 & 3.86 & 3.86 &  & 3.86 & 3.86 & 3.86 & 3.86 & 3.86 \\
\hline
\end{tabular}
\end{center}
}
\end{table}

\section{Example: small area estimation of average crime numbers in Tokyo}\label{sec:app}
We demonstrate the proposed methods using a dataset of the number of police-recorded crimes in the Tokyo metropolitan area, obtained from the University of Tsukuba and publicly available online (``GIS database of the number of police-recorded crime at O-aza, chome in Tokyo, 2009-2017'', available at \url{https://commons.sk.tsukuba.ac.jp/data_en}). 
This study focuses on the number of violent crimes in $2,855$ local towns in the Tokyo metropolitan area.
The number of violent crimes per year is observed for nine years (from 2009 to 2017) in each town. 
We first excluded areas where no violent crime has occurred in the nine years, resulting in $m=2826$ areas used in the analysis.  
We first computed the average number of crimes per unit km$^2$ (denoted by $Y_i$) and its sampling variances (denoted by $D_i$) for $i=1,\ldots,m$.
The dataset contains five area-level auxiliary information, entire population density (PD), day-time population density (DPD), the density of foreign people (FD), percentage of single-person households (SH), and average year of living (AYL).
Letting $x_i$ be a six-dimensional vector of intercept and the five covariates, we consider the Fay-Herriot model (\ref{model}), where $\theta_i$ is the true average number of crimes (per unit km$^2$) in this study.

We first applied the proposed GD method, where we searched the optimal value of $\gamma$ from 
$\gamma\in \{0, 0.005, 0.01,\ldots, 0.295, 0.300\}$.
The selection criterion (\ref{gamma-selection}) with weight $\omega_i=1/D_i$ for each value of $\gamma$ is shown in the left panel in Figure~\ref{fig:crime}.
We can observe that the criterion has a unique minimizer and found that $\gamma_{opt}=0.095$ (bounded away from 0), indicating that there exist outlying areas and the normality assumption for the distribution of $\theta_i$ seems to be violated in some areas.  
The estimated values of the regression coefficients and the variance parameter ($A$) based on the GD method with the optimal value of $\gamma$ are reported in Table \ref{tab:crime-coef}, where the estimates from the empirical Bayes (EB) and density power divergence (DPD) methods used in Section~\ref{sec:sim} are also reported for comparison.  
From Table \ref{tab:crime-coef}, it can be seen that the estimates of $\beta$ and $A$ in EB tend to be larger than those in GD because of the existence of outlying areas. 
In particular, the estimate of $A$ in EB is 20 times larger than that in GD, and such a large value of $A$ leads to inefficient interval estimation.
Still, the estimates by DPD seem to be slightly affected by outlying areas. 
In the right panel in Figure~\ref{fig:crime}, we present scatter plots of the estimates of area-wise average crime numbers produced by GD, EB, DPD, and GEB, against the direct estimator. 
This plot reveals the over-shrinkage property of EB, that is, resulting estimates in areas with large values of $y_i$ (i.e. outlying areas) tend to be much smaller than $y_i$. 
On the other hand, the three robust methods tend to produce relatively similar estimates in outlying areas.
In particular, GD provides almost identical estimates in outlying areas and shrunken estimates in non-outlying areas, showing adaptively robustness according to the values of $y_i$.

We also computed $95\%$ confidence intervals of $\theta_i$ based on GD and EB methods.
To see the efficiency of the interval estimation, we calculated scaled average length, $m^{-1}\sum_{i=1}^mL_i/\sqrt{D_i}$, where $L_i$ is the interval length in the $i$th area, which are reported in Table \ref{tab:crime-coef}.
The result shows that GD provides more efficient intervals than EB on average. 
Furthermore, we see a more detailed difference between GD and EB methods together with the direct interval (DR) of the form (\ref{crude}). 
To see this, we show $95\%$ confidence intervals in areas having the top 30 largest $y_i$ (i.e., outlying areas), in the left panel of Figure \ref{fig:crime-CI}, which reveals the undesirable properties of EB; not only the interval length but also the interval location of EB is much smaller than the direct estimator $y_i$.
On the other hand, the confidence intervals of GD are almost identical to those of DR, owing to the robustness property given in Proposition \ref{Proposition:CI proposed}.
Furthermore, as shown in the right panel in Figure \ref{fig:crime-CI}, GD confidence intervals in non-outlying areas are much shorter than DR by successful ``borrowing strength" from other areas.

\begin{figure}[!htb]
\centering
\includegraphics[width=14cm,clip]{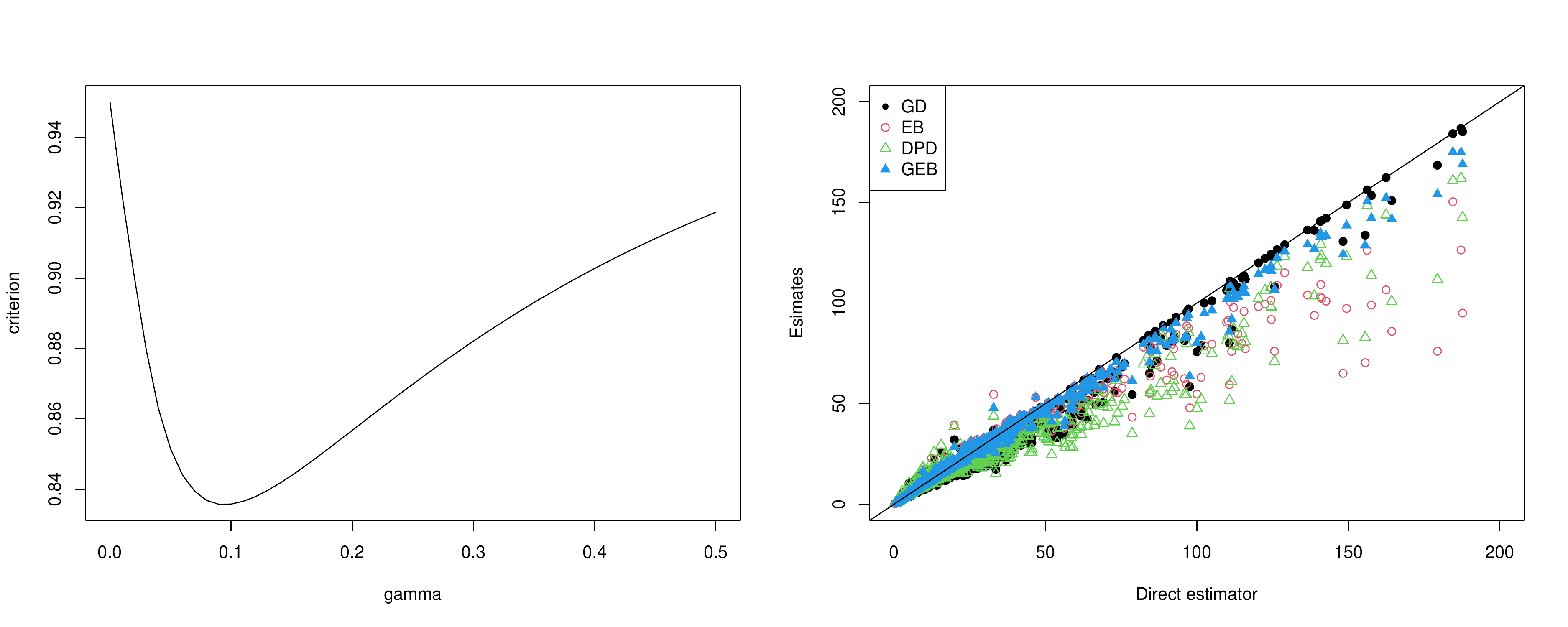}
\caption{Selection criterion as a function of $\gamma$ in the proposed GD method (left) and a scatter plot of direct estimates and empirical Bayes estimates of GD, EB, DPD and GEB methods used in Section \ref{sec:sim}, obtained in the analysis of crime data. }
\label{fig:crime}
\end{figure}

\begin{table}[htbp]
\caption{Estimates of the regression coefficients $\beta$ and variance parameter $A$, and scaled average interval lengths scaled by $\sqrt{D_i}$, of the proposed GD and the standard EB method. }
\label{tab:crime-coef}
\begin{center}
\begin{tabular}{cccccccccccccc}
\hline
& \multicolumn{6}{c}{Regression coefficient ($\beta$)}  & & Average\\
method & Int & PD & DPD & FD & SH & AYL &  $A$ & length\\
\hline
GD & 7.81 & 1.42 & 4.49 & 1.28 & 0.79 & 0.35 & 11.65 & 3.57 \\
EB & 12.20 & 0.60 & 8.64 & 2.96 & 2.64 & 1.64 & 231.54  & 3.81 \\
DPD & 9.33 &	1.53 &	5.52 &	1.65 &	1.32 & 0.67 &38.24 & --\\
\hline
\end{tabular}
\end{center}
\end{table}

\begin{figure}[!htb]
\centering
\includegraphics[width=14cm,clip]{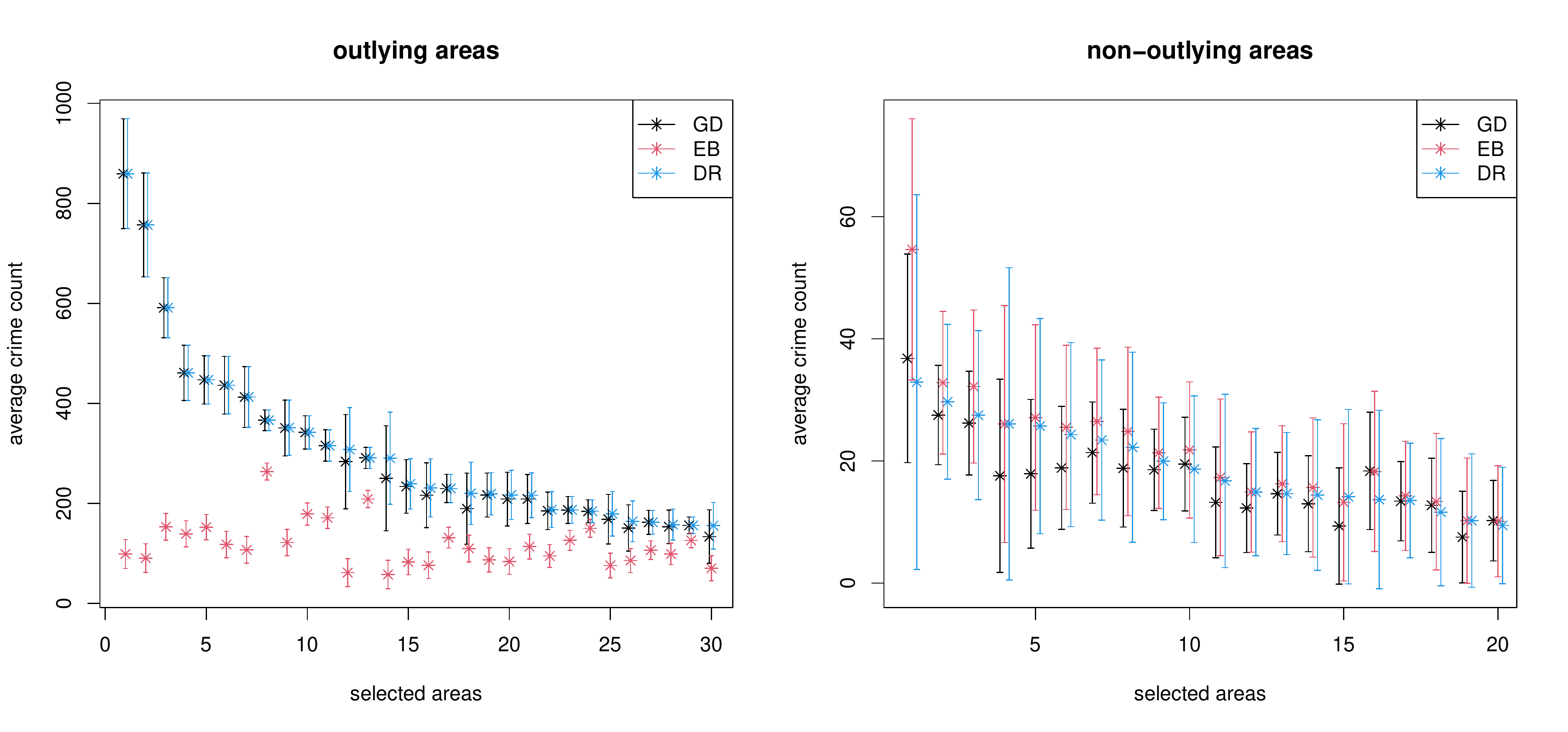}
\caption{$95\%$ confidence intervals in outlying areas having top 30 largest $y_i$ (left) and in 30 randomly selected non-outlying areas (right), obtained by GD, EB and DR methods}
\label{fig:crime-CI}
\end{figure}

\section{Discussion}\label{sec:dis}

Although we employed the $\gamma$-divergence as a robust alternative of the standard likelihood, it would be possible to adopt other robust divergences such as density power divergence \citep{basu1998robust} and $\alpha\beta$-divergence \citep{cichocki2011generalized}, to obtain a robust confidence interval.
The main reason to use $\gamma$-divergence is its strong robustness against contamination; that is, the robust estimator converges to a quantity with a small (latent) bias against the true value, which was the basis of our theoretical development.
The numerical advantage of such strong robustness can also be confirmed in our simulation study in Section~\ref{sec:sim}.

An important remaining work would be a further investigation of the second-order refinement of the proposed robust confidence intervals by using Bootstrap, as partly mentioned in Section~\ref{sec:method}.
Although the second-order refinement of the standard confidence interval can be achieved under correct specification, it would be challenging to extend the theory to the situation under the existence of outlying areas, as considered in Section~\ref{sec:cont}.
Moreover, it would also be challenging to extend the asymptotic approximation of mean squared errors \citep[e.g.][]{prasad1990estimation,das2004mean} to the situation under the existence of outlying areas.

\section*{Acknowledgement}
This work is partially supported by the Japan Society for Promotion of Science (KAKENHI) grant numbers 20K13468 and 21H00699.

\section*{Supplementary material}
The Supplementary Material includes proofs of Propositions \ref{Proposition:CI crude} and \ref{Proposition:CI proposed} (Supplementary Material S1), proofs of Theorems \ref{gamma_op}, \ref{coverage-conv},  \ref{gamma-conv-contami-2}, and \ref{contami-coverage} (Supplementary Material S2) and further discussion regarding the results of Theorem~\ref{contami-coverage} and additional simulation results on the effect of the tuning parameter $\gamma$ (Supplementary Material S3).

\bibliographystyle{chicago}
\bibliography{refs}



%
\newpage 

\setcounter{page}{1}
\setcounter{equation}{0}
\renewcommand{\theequation}{S\arabic{equation}}
\setcounter{section}{0}
\renewcommand{\thesection}{S\arabic{section}}
\setcounter{lemma}{0}
\renewcommand{\thelemma}{S\arabic{lemma}}
\setcounter{table}{0}
\renewcommand{\thetable}{S\arabic{table}}
\setcounter{figure}{0}
\renewcommand{\thefigure}{S\arabic{figure}}

\vspace{2cm}
\begin{center}
{\bf {\LARGE Supplementary Material for ``Adaptively Robust Small Area Estimation: Balancing Efficiency and Robustness of Empirical Bayes Confidence Intervals"}}
\end{center}

This supplementary material includes proofs of Propositions \ref{Proposition:CI crude} and \ref{Proposition:CI proposed} (Supplementary Material \ref{Proof: Prop1-2}), proofs of Theorems \ref{gamma_op}, \ref{coverage-conv},  \ref{gamma-conv-contami-2}, and \ref{contami-coverage} (Supplementary Material \ref{Proof: Thm1-4}) and additional simulation results on the effect of the tuning parameter $\gamma$ (Supplementary Material \ref{sec:sim-add}).

\section{Proofs for Section \ref{sec:method}}\label{Proof: Prop1-2}

\subsection{Proof of Proposition \ref{Proposition:CI crude}}

Note that the maximum likelihood estimators $\beh$ and $\Ah$ based on the marginal likelihood (\ref{ML}) solve the equations
\begin{align}
    \sum_{i=1}^{m}\frac{x_iy_i}{A + D_i} &= \sum_{i=1}^{m}\frac{x_ix_i^\top \beta}{A + D_i}, \label{MLE-eq1}\\ 
    \sum_{i=1}^{m}\frac{(y_i - x_i^\top \beta)^2}{(A + D_i)^2} &= \sum_{i=1}^{m}{1 \over A+D_i}. \label{MLE-eq2}
\end{align}
From (\ref{MLE-eq1}), we have
\begin{align*}
\max_{1\leq i \leq m}\left| {x_i^\top \beh \over x_i^\top \tilde{\beta}} - 1 \right| = O\left(\frac{1}{|y_1|}\right) \to 0\ \text{as $|y_1| \to \infty$},
\end{align*}
where $\tilde{\beta} = \big\{\sum_{i=1}^{m}x_ix_i^\top(A + D_i)^{-1}\big\}^{-1}x_1y_1/(A+D_1)$. Replacing $\beta$ with $\tilde{\beta}$ in (\ref{MLE-eq2}), we have
\begin{align*}
&\sum_{i=1}^{m}\frac{(y_i -  x_i^\top\tilde{\beta})^2}{(A + D_i)^2}\\ 
&= \frac{y_1^2}{(A + D_1)^2}\left(1- x_1^\top\left(\sum_{i=1}^{m}\frac{1}{A + D_i}\right)^{-1}\left(\frac{x_1}{A + D_1}\right)\right)^2\\
    &\quad + \frac{y_1^2}{(A+D_i)^2}\sum_{i=2}^{m}\left(x_i^\top \left(\sum_{j=1}^{m}{x_j x_j^\top \over A + D_j}\right)^{-1}x_1\right)^2\frac{1}{(A + D_i)^2}\\
    &\quad - \frac{2y_1}{A + D_1}\sum_{i=2}^{m}\left(x_i^\top \left(\sum_{j=1}^{m}{x_j x_j^\top \over A + D_j}\right)^{-1}x_1\right)\frac{1}{(A + D_i)^2} +  \sum_{i=2}^{m}\frac{y_i^2}{(A + D_i)^2}\\
    &= O\left({|y_1|^2 \over A^2}\right)
\end{align*}
as $|y_1| \to \infty$. In this case, (\ref{MLE-eq2}) implies $\sum_{i=1}^{m}{1 \over A + D_i} = O\left({|y_1|^2 \over A^2}\right)$. Since the left hand side is $O(A^{-1})$ and the right hand side is $O(|y_1|^2A^{-2})$, we have $\Ah = O(|y_1|^2)$. Therefore,  
\begin{align*}
\thh_i - y_i &=- \frac{D_i}{\Ah + D_i}(y_i - x_i^\top\beh)\\
&=- \frac{D_i}{\Ah + D_i}\left\{y_i - x_i^\top \left(\sum_{j=1}^{m}\frac{x_j x_j^\top}{A + D_i}\right)^{-1}\left(\frac{x_1y_1}{A+D_1}\right)(1+o(1))\right\} \to 0
\end{align*}
and $\sh_i^2 = \frac{\Ah D_i}{\Ah + D_i} - D_i \to 0$ as $|y_1| \to \infty$. 

\subsection{Proof of Proposition \ref{Proposition:CI proposed}}

Define $\mathcal{D}_{\gamma,y_1}(\psi)\equiv  m^{-1}\sum_{i=1}^{m}\mathcal{D}_{\gamma}(y_i;\psi)$.
Let $\psih_{\dagger}^{(\gamma)} = (\beh^{(\gamma)}_{\dagger},\Ah^{(\gamma)}_{\dagger})$ be the maximizer of 
\begin{align*}
\mathcal{D}_{\gamma,\infty}(\psi)\equiv  \frac{1}{m}\sum_{i=2}^{m}\mathcal{D}_{\gamma}(y_i;\psi) 
\end{align*}
and let $\Psi_{\infty} \subset \mathbb{R}^p \times (0,\infty)$ be a compact set such that $\psih_{\dagger}^{(\gamma)} \in \Psi_{\infty}$. 

First we prove $\psih^{(\gamma)} \to \psih^{(\gamma)}_{\dagger}$ as $|y_1| \to \infty$. Note that 
$\psih^{(\gamma)} \in \Psi_{\infty}$ for sufficiently large $|y_1|$, and as $|y_1| \to \infty$, $\sup_{\psi \in \Psi_{\infty}}|\mathcal{D}_{\gamma,y_1}(\psi) - \mathcal{D}_{\gamma,\infty}(\psi)| \to 0$. Then we have 
\begin{align}
    |\mathcal{D}_{\gamma,\infty}(\psih^{(\gamma)}_{\dagger})-\mathcal{D}_{\gamma,\infty}(\psih^{(\gamma)})| &\leq |\mathcal{D}_{\gamma,\infty}(\psih^{(\gamma)}_{\dagger})-\mathcal{D}_{\gamma,y_1}(\psih^{(\gamma)}_{\dagger})| + |\mathcal{D}_{\gamma,y_1}(\psih^{(\gamma)}_{\dagger})-\mathcal{D}_{\gamma,y_1}(\psih^{(\gamma)})| \nonumber \\ 
    &\quad + |\mathcal{D}_{\gamma,y_1}(\psih^{(\gamma)})-\mathcal{D}_{\gamma,\infty}(\psih^{(\gamma)})| \nonumber\\
    &\leq 2\sup_{\psi \in \Psi_{\infty}}|\mathcal{D}_{\gamma,y_1}(\psi) - \mathcal{D}_{\gamma,\infty}(\psi)| + |\mathcal{D}_{\gamma,y_1}(\psih^{(\gamma)}_{\dagger})-\mathcal{D}_{\gamma,y_1}(\psih^{(\gamma)})| \nonumber \\
    &\to 0\ \text{as $|y_1| \to \infty$}. \label{GD-MLE-fin-sample}
\end{align}
Since $\mathcal{D}_{\gamma,\infty}(\psi)$ is continuous over $\psi \in \Psi_{\infty}$, (\ref{GD-MLE-fin-sample}) implies that 
\begin{align}\label{conv-MLE-dagger}
    \psih^{(\gamma)} \to \psih^{(\gamma)}_{\dagger}\ \text{as $|y_1| \to \infty$}. 
\end{align}

Next we prove (i) $\thh_1^{(\gamma)} - y_1 \to 0$ and $\sh_1^{(\gamma)}\to D_1$, and (ii) $\thh_i^{(\gamma)} \to \thh_{i,\dagger}^{(\gamma)}$ and $\sh_i^{(\gamma)} \to \sh_{i,\dagger}^{(\gamma)}$ for $i=2,\hdots,m$ as $|y_1| \to \infty$. It is straightforward to show (ii) by applying (\ref{conv-MLE-dagger}). Since $|y_1|^{2}\phi(y_1:x_1^\top \beh^{(\gamma)}_{\dagger}, \Ah^{(\gamma)}_{\dagger} + D_1)^{\gamma} \to 0$ as $|y_1| \to \infty$, we have $\thh_1^{(\gamma)} - y_1 \to 0$ and $\sh_1^{(\gamma)}- \sqrt{D_1} \to 0$.

\section{Proofs for Section \ref{sec:theory}}\label{Proof: Thm1-4}

We first prove the results on $\gamma_{op}$ (Theorems \ref{gamma_op} and \ref{gamma-conv-contami-2}) and then prove the results on confidence intervals (Theorems \ref{coverage-conv} and \ref{contami-coverage}). 

\subsection{Proof of Theorem \ref{gamma_op}}

Before we see the asymptotic property of $\gamma_{op}$, we prepare some auxiliary results. Our approach is based on general theory for $M$-estimators \citep[e.g.][]{vanderVaart1998}. Lemma \ref{unif_LLN} is a uniform law of large numbers for a class of functions. Lemma \ref{conv-mu_tau} provides the consistency of $\psih^{(\gamma)}$.

\begin{lemma}\label{unif_LLN}
Let $\{(y_i,x_i,D_i)\}_{i \geq 1}$ be a sequence of i.i.d. random vectors such that $0 < D_L \leq D_i \leq D_U < \infty$ and $E[y_i^2], E[\|x_i\|^2] < \infty$. Define $\mathcal{F} = \{ (y,x,d) \mapsto f_{\psi, \ga}(y,x,d) : \psi \in \Psi, \ga \in [0,1] \}$ and
\begin{align*}
    f_{\psi,\ga} (y,x,d) &= \phi(y:x^\top\beta,A + d)^{\ga} \frac{d^2}{(A + d)^2} \\
    & \hspace{0.8in} \times \left\{ \ga (y_1-x^\top\beta)^2 - (A + d) \right\} \{2\pi (A + d)\}^{\ga^2 / 2(1+\ga)}
\end{align*}
where $\psi = (\beta, A)$ and $\Psi$ is a compact subset of $\mathbb{R}^{p} \times (0,\infty)$. Then we have
\[
\sup_{f \in \mathcal{F}} \left| P_m f - P f \right| \stackrel{p}{\to}0 \ \text{as $m \to \infty$},
\]
where $P_{m}f = \frac{1}{m}\sum_{i=1}^{m}f(y_i,x_i,D_i)$ and $Pf = E[f(y_i,x_i,D_i)]$.
\end{lemma}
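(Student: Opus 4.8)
The plan is to establish a uniform law of large numbers for the class $\mathcal{F}$ by verifying that it is a Glivenko--Cantelli class; the standard route is to bound its bracketing numbers and invoke the bracketing ULLN (e.g.\ Theorem~19.4 in \cite{vanderVaart1998}). First I would note that each $f_{\psi,\gamma}$ is a product of factors that depend smoothly on $(\psi,\gamma)$ over the compact set $\Psi\times[0,1]$: the Gaussian density $\phi(y:x^\top\beta,A+d)^\gamma$, the rational factor $d^2/(A+d)^2$, the quadratic-in-$y$ bracket $\{\gamma(y-x^\top\beta)^2-(A+d)\}$, and the normalizing constant $\{2\pi(A+d)\}^{\gamma^2/2(1+\gamma)}$. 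Because $A$ is bounded below by some $A_L>0$ and above by $A_U$, and $d\in[D_L,D_U]$, $\|x\|$ ranges over a set on which $E\|x\|^2<\infty$, each factor is Lipschitz in $(\psi,\gamma)$ with a Lipschitz ``constant'' that is a polynomial in $(y,\|x\|)$ times a bounded function; crucially the exponential factor $\phi^\gamma\le (2\pi (A+d))^{-\gamma/2}$ is uniformly bounded and, being $\exp(-\gamma(y-x^\top\beta)^2/2(A+d))$, it decays in $y$, so $\phi^\gamma|y-x^\top\beta|^k$ is bounded by a constant uniformly in $(\psi,\gamma)$ for each fixed $k$.

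Next I would assemble an envelope function $F(y,x,d)$ and a modulus-of-continuity bound. Writing $f_{\psi,\gamma}=g_1 g_2 g_3 g_4$ with the four factors above, a telescoping estimate $|f_{\psi,\gamma}-f_{\psi',\gamma'}|\le\sum_j |g_j-g_j'|\prod_{k\ne j}(\sup|g_k|)$ reduces matters to controlling each $|g_j-g_j'|$. Since $t\mapsto e^{-\gamma t}$ and $t\mapsto t^{\gamma^2/2(1+\gamma)}$ are smooth on compacta, and $(\beta,A,\gamma)\mapsto \phi(y:x^\top\beta,A+d)^\gamma$ has derivative bounded by $C(1+\|x\|^2+y^2)$ times a uniformly bounded exponential, one gets
\[
|f_{\psi,\gamma}(y,x,d)-f_{\psi',\gamma'}(y,x,d)|\le C\,(1+\|x\|^2+y^2)\,\big(\|\psi-\psi'\|+|\gamma-\gamma'|\big)
\]
for a constant $C$ depending only on $D_L,D_U,A_L,A_U$ and the bound on $\mathcal{X}$. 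Taking $F(y,x,d)=C'(1+\|x\|^2+y^2)$ gives an integrable envelope (here I use $E[y_i^2],E\|x_i\|^2<\infty$), and the displayed Lipschitz bound shows that an $\epsilon$-net of the compact index set $\Psi\times[0,1]$ of cardinality $O(\epsilon^{-(p+2)})$ induces $L_1(P)$-brackets of size $O(\epsilon)$ for $\mathcal{F}$. Hence $N_{[\,]}(\epsilon,\mathcal{F},L_1(P))<\infty$ for every $\epsilon>0$.

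With finite bracketing numbers and an integrable envelope, the bracketing Glivenko--Cantelli theorem yields $\sup_{f\in\mathcal{F}}|P_m f-Pf|\to 0$ almost surely, hence in probability, which is the claim. The main obstacle — and the only place that needs genuine care — is the uniform control of the Gaussian factor $\phi(y:x^\top\beta,A+d)^\gamma$ and its increments as $\gamma\to 0$: at $\gamma=0$ the factor is identically $1$, so one must check that the map $\gamma\mapsto\phi^\gamma$ (and the constant $\{2\pi(A+d)\}^{\gamma^2/2(1+\gamma)}$, which also tends to $1$) is jointly Lipschitz in a neighborhood of $\gamma=0$ with the polynomial envelope above, using that $|e^{-\gamma t}-e^{-\gamma' t}|\le |t|\,|\gamma-\gamma'|$ and that the surviving factor $\phi^{\gamma}(y-x^\top\beta)^2$ stays bounded. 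Once that is in hand the rest is routine bookkeeping with products of bounded and Lipschitz factors.
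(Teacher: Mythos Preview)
Your proof is correct and follows the same overall strategy as the paper: show that $\mathcal{F}$ is a Glivenko--Cantelli class by exploiting the compactness of the index set $\Psi\times[0,1]$ together with an integrable envelope, and then appeal to van~der~Vaart's theory. The paper, however, takes a shorter path: it simply observes that $(\psi,\gamma)\mapsto f_{\psi,\gamma}(y,x,d)$ is continuous for each fixed $(y,x,d)$ and that $|f_{\psi,\gamma}|$ is dominated by a multiple of $1+(y-x^\top\beta)^2$, and then invokes Example~19.8 of \cite{vanderVaart1998} directly. You instead establish a quantitative Lipschitz bound in $(\psi,\gamma)$ and pass through bracketing numbers (Theorem~19.4), which is effectively unpacking the proof of Example~19.8 by hand. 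Your route yields a bit more information (an explicit polynomial bound on the bracketing numbers) at the cost of the extra bookkeeping you flag around $\gamma\to 0$; the paper's route avoids that work entirely because continuity alone suffices for Example~19.8. One small remark: your Lipschitz constant does not actually need a bound on $\mathcal{X}$ --- compactness of $\Psi$ together with $E[\|x_i\|^2]<\infty$ is enough, which is all the lemma assumes.
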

\begin{proof}
Because $0 < D_L \leq D_i \leq D_U < \infty$, for $\psi \in \Psi$ and $\ga \in [0,1]$, there exists a positive constant $C$ such that
\[
\left| f_{\psi,\ga} (y_i,x_i,D_i) \right| \ \leq \ C(y_i - x_i^\top\beta)^2.
\]
Because $\Psi$ is compact and $E[y_i^2], E[\|x_i\|^2] < \infty$, there exists a function $F$ such that $\sup_{\psi \in \Psi, \ga \in [0,1]} |f_{\psi,\ga} (y_i,x_i,D_i)| \leq F(y_i,x_i,D_i)$ and $E[F(y_i,x_i,D_i)] < \infty$. In addition, the map $(\psi, \ga) \mapsto f_{\psi,\ga}(y,x,d)$ is continuous for every $(y,x,d)$ and $\Psi \times [0,1]$ is compact. Hence, from Example 19.8 in \cite{vanderVaart1998}, $\mathcal{F}$ is Glivenko-Cantelli.
\end{proof}

\begin{lemma}\label{conv-mu_tau}
Let $\{(y_i,x_i,D_i)\}_{i \geq 1}$ be a sequence of i.i.d. random vectors and there exist positive constants $D_{L}$ and $D_{U}$ such that $0< D_{L} \leq D_i \leq D_{U}<\infty$. Assume that $\Psi \subset \mathbb{R}^{p} \times (0,\infty)$ is a compact parameter space of $\psi_{\ast} \equiv  (\beta_{\ast},A_{\ast})$. When $y_i | \theta_i, D_i \sim N(\theta_i, D_i)$ and $\theta_i \sim N(x_i^\top\beta_{\ast}, A_{\ast})$, for any $\ga \in [0,1]$ we have
\[
\beh_{\ga} \stackrel{p}{\to} \beta_{\ast} \ \ \text{and} \ \ \Ah_{\gamma} \stackrel{p}{\to} A_{\ast}.
\]
\end{lemma}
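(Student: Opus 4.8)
The plan is to prove consistency by the usual $M$-estimation argument: a uniform law of large numbers for the criterion function together with a well-separated population maximizer at $\psi_{\ast}$. Set $M_m(\psi) = m^{-1}\sum_{i=1}^{m}\mathcal{D}_{\gamma}(y_i;\psi)$ and $M(\psi) = E[\mathcal{D}_{\gamma}(y_1;\psi)]$, so that $\psih^{(\gamma)} = (\beh_{\gamma},\Ah_{\gamma})$ maximizes $M_m$ over the compact set $\Psi$ (see (\ref{est})). For $\gamma = 0$ the estimator is the marginal maximum likelihood estimator, whose consistency under correct specification is classical, so I would concentrate on $\gamma \in (0,1]$.

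First I would establish $\sup_{\psi \in \Psi}|M_m(\psi) - M(\psi)| \stackrel{p}{\to} 0$. This is precisely the Glivenko--Cantelli property proved in Lemma~\ref{unif_LLN}, applied to the class $\{(y,x,d) \mapsto \mathcal{D}_{\gamma}(y;\psi) : \psi \in \Psi\}$: since $A + d \geq D_L$ on $\Psi$, the normal density $\phi(y;x^\top\beta, A + d)$ is bounded by $(2\pi D_L)^{-1/2}$, and since $\Psi$ is compact and $d \leq D_U$, the factor $c_{\gamma}(A) = \{2\pi(A + d)\}^{\gamma^2/2(1+\gamma)}$ is bounded, so $|\mathcal{D}_{\gamma}(y;\psi)|$ has a constant (hence integrable) envelope; together with continuity of $\psi \mapsto \mathcal{D}_{\gamma}(y;\psi)$ for each $(y,x,d)$ and compactness of $\Psi$, Example~19.8 in \cite{vanderVaart1998} gives the claim. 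In particular $M$ is continuous on $\Psi$.

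The step I expect to require the most care is showing that $M$ has a unique maximizer at $\psi_{\ast}$. Conditioning on $(x_1, D_1)$, correct specification gives $y_1 \mid x_1, D_1 \sim N(x_1^\top\beta_{\ast}, A_{\ast} + D_1)$; writing $g$ for this density and $f_{\psi}$ for $\phi(\cdot;x_1^\top\beta, A + D_1)$, a short calculation gives $E[\mathcal{D}_{\gamma}(y_1;\psi) \mid x_1, D_1] = \gamma^{-1}c_{\gamma}(A)\int g f_{\psi}^{\gamma}$, and since $c_{\gamma}(A)$ equals a constant depending only on $\gamma$ times $\{\int f_{\psi}^{1+\gamma}\}^{-\gamma/(1+\gamma)}$, this conditional mean is a positive multiple of $\{\int f_{\psi}^{1+\gamma}\}^{-\gamma/(1+\gamma)}\int g f_{\psi}^{\gamma}$. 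H\"older's inequality with conjugate exponents $1+\gamma$ and $(1+\gamma)/\gamma$ gives $\int g f_{\psi}^{\gamma} \leq \{\int g^{1+\gamma}\}^{1/(1+\gamma)}\{\int f_{\psi}^{1+\gamma}\}^{\gamma/(1+\gamma)}$, hence $E[\mathcal{D}_{\gamma}(y_1;\psi) \mid x_1, D_1] \leq E[\mathcal{D}_{\gamma}(y_1;\psi_{\ast}) \mid x_1, D_1]$, with equality iff $g^{1+\gamma} \propto f_{\psi}^{1+\gamma}$, i.e.\ iff $x_1^\top\beta = x_1^\top\beta_{\ast}$ and $A = A_{\ast}$. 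Taking expectations over $(x_1, D_1)$ gives $M(\psi) \leq M(\psi_{\ast})$, with equality only if $A = A_{\ast}$ and $x_1^\top\beta = x_1^\top\beta_{\ast}$ almost surely; positive-definiteness of $E[x_1 x_1^\top]$ (Condition~2(ii)) then forces $\beta = \beta_{\ast}$. Thus $\psi_{\ast}$ is the unique maximizer of $M$ over $\Psi$, and since $M$ is continuous and $\Psi$ compact, $\sup\{M(\psi) : \psi \in \Psi,\ \|\psi - \psi_{\ast}\| \geq \varepsilon\} < M(\psi_{\ast})$ for every $\varepsilon > 0$.

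Finally, combining the uniform convergence, the well-separated maximum, and $M_m(\psih^{(\gamma)}) \geq M_m(\psi_{\ast})$ (which holds by definition of $\psih^{(\gamma)}$), the consistency theorem for $M$-estimators (Theorem~5.7 in \cite{vanderVaart1998}) yields $\psih^{(\gamma)} \stackrel{p}{\to} \psi_{\ast}$, that is, $\beh_{\gamma} \stackrel{p}{\to} \beta_{\ast}$ and $\Ah_{\gamma} \stackrel{p}{\to} A_{\ast}$. The only genuinely nontrivial part is the H\"older/information-inequality identification in the third paragraph; the uniform law of large numbers and the concluding argmax step are routine and reuse the machinery already set up for Lemma~\ref{unif_LLN}.
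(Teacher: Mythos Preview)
Your proposal is correct and follows the same overall $M$-estimation architecture as the paper: uniform law of large numbers via Example~19.8 in \cite{vanderVaart1998}, identification of the population maximizer at $\psi_\ast$, and the concluding appeal to Theorem~5.7 in \cite{vanderVaart1998}. The one substantive difference is in how the identification step is carried out. The paper verifies that the gradient of $E[g_\gamma(y_i,x_i,D_i;\psi)]$ vanishes at $\psi_\ast$ by computing the conditional expectations $E[\phi(y_i;x_i^\top\beta,A+D_i)^\gamma \mid x_i,D_i]$ and $E[\phi(y_i;x_i^\top\beta,A+D_i)^\gamma(y_i-x_i^\top\beta)^2 \mid x_i,D_i]$ explicitly and then asserts that $\psi_\ast$ is the maximizer. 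Your H\"older argument instead gives a direct global inequality $M(\psi)\leq M(\psi_\ast)$ with a clean characterization of the equality case, which immediately delivers \emph{uniqueness} of the maximizer and hence the well-separatedness condition needed for Theorem~5.7; the paper's first-order calculation, read literally, only shows that $\psi_\ast$ is a critical point and leaves the global uniqueness implicit. Your route is therefore slightly more self-contained on this point, at the cost of not producing the explicit moment formulas (\ref{expect}) that the paper reuses later in the proof of Theorem~\ref{gamma_op}. Both approaches rely on Condition~2(ii) to pass from $x_1^\top\beta = x_1^\top\beta_\ast$ a.s.\ to $\beta=\beta_\ast$.
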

\begin{proof}
For any $\gamma \in (0,1]$, we define
\begin{align*}
    g_{\ga}(y_i,x_i,D_i;\psi) &\equiv  \frac{1}{\ga} \phi(y_i;x_i^\top \beta,A+D_i)^{\ga} c_{\gamma}(A) - \frac{1}{\gamma}.
\end{align*}
In addition, we define $g_{0}(y_i,x_i,D_i;\psi) \equiv  \log \left\{ \phi(y_i;x_i^\top \beta,A+D_i) \right\}$. Then, we have $\psih^{(\ga)} = (\beh_{\ga},\Ah_{\ga}) = {\rm argmax}_{\psi \in \Psi} m^{-1} \sum_{i=1}^m g_{\ga}(y_i,x_i,D_i;\psi)$. Similar to Lemma \ref{unif_LLN}, from Example 19.8 in \cite{vanderVaart1998}, for any $\ga \in [0,1]$ we have
\begin{equation}
    \sup_{\psi \in \Psi} \left| \frac{1}{n} \sum_{i=1}^m g_{\ga}(y_i,x_i,D_i;\psi) - E[g_{\ga}(y_i,x_i,D_i;\psi)] \right| = o_p(1).
    \label{D-conv}
\end{equation}
We observe that
\begin{align*}
\frac{\partial}{\partial \beta} &E[g_{\ga}(y_i,x_i,D_i;\psi)] = E\left[ E \left[ \phi(y_i;x_i^\top\beta,A+D_i)^{\ga} \left( \frac{y_i-x_i^\top\beta}{A+D_i} \right)c_{\gamma}(A) x_i \Big| x_i, D_i \right] \right], \\
\frac{\partial}{\partial A}
&E[g_{\ga}(y_i,x_i,D_i;\psi)]\\
& \ \  
= \frac{1}{2} E\left[ E\left[ \frac{\phi(y_i;x_i^\top\beta,A+D_i)^{\ga}}{(A+D_i)^2}  \left\{ (y_i-x_i^\top\beta)^2 - \frac{A+D_i}{1+\gamma} \right\}c_{\gamma}(A)\ \Big| x_i, D_i \right] \right].
\end{align*}
Since $\phi(y_i;x_i^\top\beta , A+D_i)$ is symmetric around $y_i=x_i ^\top \beta$, the first derivative is $0$ when $\beta=\beta_{\ast}$ and $A=A_{\ast}$.
Regarding the second derivative, note that when $\beta=\beta_{\ast}$ and $A=A_{\ast}$, 
\begin{equation}\label{expect}
\begin{split}
E[\phi(y_i;x_i^\top \beta,A+D_i)^{\ga}|x_i,D_i]&=V_i(1+\gamma)^{-1/2}\\
E[\phi(y_i;x_i^\top \beta,A+D_i)^{\ga}(y_i-x_i^\top \beta)^2|x_i,D_i]&=V_i(1+\gamma)^{-3/2}(A+D_i), 
\end{split}
\end{equation}
where $V_i=\{2\pi(A+D_i)\}^{-\gamma/2}$.
This shows that the second derivative is also $0$. 
This implies that $E[g_{\ga}(y_i,D_i;\psi)]$ is maximized at $\psi = (\beta_{\ast},A_{\ast})$. Hence, from Theorem 5.7 in \cite{vanderVaart1998}, we obtain consistency of $\beh_{\ga}$ and $\Ah_{\ga}$ for all $\ga \in [0,1]$.
\end{proof}

Now we move on to the proof of Theorem \ref{gamma_op}. We define
\begin{align*}
\hat{Q}_{m}(\ga; \beta, A) &\equiv   \frac{1}{n}\sum_{i=1}^{m}\phi(y_i;x_i^\top \beta,A+\si_i^2)^{\gamma}\frac{D_i^2}{(A+D_i)^2}\left\{\ga(y_i-x_i^\top \beta)^2- (A+D_i)\right\}c_{\gamma}(A), \\
Q(\ga; \beta, A) &\equiv  E\left[ \phi(y_i;x_i^\top \beta,A+\si_i^2)^{\gamma}\frac{D_i^2}{(A+D_i)^2}\left\{\ga(y_i-x_i^\top \beta)^2- (A+D_i)\right\}c_{\gamma}(A) \right],
\end{align*}
then $\ga_{op}={\rm argmin}_{\ga \in \Gamma}\hat{Q}_{m}(\ga;\beh_{\ga},\Ah_{\gamma})$. 
By the definition of $\psih_{\ga} \equiv  (\beh_{\ga},\Ah_{\gamma})$, we have $(\beh_{\ga}, \Ah_{\gamma}) \in \Psi$ with probability 1. From the triangular inequality, we obtain
\begin{align*}
    & \left| \hat{Q}_{m}(\ga;\beh_{\ga},\Ah_{\gamma}) - \hat{Q}_{m}(\ga;\beta_{\ast},A_{\ast}) \right| \\
    &\leq \left| \hat{Q}_{m}(\ga;\beh_{\ga},\Ah_{\gamma}) - Q(\ga;\beh_{\ga},\Ah_{\gamma}) \right| + \left| \hat{Q}_{m}(\ga;\beta_{\ast},A_{\ast}) - Q(\ga;\beta_{\ast},A_{\ast}) \right| \\
    & \hspace{0.5in} + \left| Q(\ga;\beh_{\ga},\Ah_{\gamma}) - Q(\ga;\beta_{\ast},A_{\ast}) \right| \\
    & \leq 2 \sup_{(\beta,A) \in \Psi} \left| \hat{Q}_{m}(\ga;\beta,A) - Q(\ga;\beta,A) \right| + \left| Q(\ga;\beh_{\ga},\Ah_{\gamma}) - Q(\ga;\beta_{\ast},A_{\ast}) \right|.
\end{align*}
Because $\beh_{\ga} \stackrel{p}{\to} \beta_{\ast}$ and $\Ah_{\gamma} \stackrel{p}{\to} A_{\ast}$ from Lemma \ref{conv-mu_tau} and $Q(\ga;\beta,A)$ is continuous in $\beta$ and $A$, it follows from Lemma \ref{unif_LLN} that $\hat{Q}_{m}(\ga;\beh_{\ga},\Ah_{\gamma}) = \hat{Q}_{m}(\ga;\beta_{\ast},A_{\ast}) + o_p(1) = Q(\ga;\beta_{\ast},A_{\ast}) + o_p(1)$.

From (\ref{expect}), we have 
$$
Q(\ga;\beta_{\ast}, A_{\ast})
=-E\left[\frac{D_i^2}{A_{\ast}+D_i}\{2\pi(A_{\ast}+D_i)\}^{-\gamma/2(1+\gamma)}(1+\gamma)^{-3/2} \right].
$$
Then, it follows that 
\begin{align*}
\frac{\partial}{\partial\gamma}&Q(\gamma; \beta_{\ast},A_{\ast}) \\
&=\frac12 E\left[\frac{D_i^2}{A_{\ast}+D_i}\{2\pi(A_{\ast}+D_i)\}^{-\gamma/2(1+\gamma)}(1+\gamma)^{-5/2}\left\{\frac{\log(2\pi(A_{\ast}+D_i))}{1+\gamma}+3\right\}\right]\\
&\geq \frac{\sigma_{L}^{4}}{2(A_{\ast} + D_U)}\{2\pi(A_{\ast}+D_U)\}^{-\gamma/2(1+\gamma)}
(1+\gamma)^{-5/2}\left\{\frac{\log(2\pi(A_{\ast}+D_L))}{1+\gamma}+3\right\}.
\end{align*}
Then, under the condition $A_{\ast} + D_L > (2\pi)^{-1}e^{-3}$, we have  
\begin{align*}
\frac{\partial}{\partial \gamma}Q(\gamma; \beta_{\ast},A_{\ast})>0\ \text{for $\gamma \in [0,1]$}. 
\end{align*}
This implies that $Q(\gamma; \beta_{\ast},A_{\ast})$ is strictly increasing on $\gamma \in [0,1]$. Hence, $\hat{Q}_{m}(0;\beh_{0},\Ah_{0}) < \min_{\ga \in \Gamma \setminus \{0\}} \hat{Q}_{m}(\ga;\beh_{\ga},\Ah_{\gamma})$ holds with probability approaching 1. This implies that $P(\ga_{op} = 0) \to 1$.

\subsection{Proof of Theorem \ref{gamma-conv-contami-2}}

Fix $\omega \in (0,1)$ and $\delta$. From Lemmas \ref{lemma: psudo-psi-conv} and \ref{lemma: latent-bias-bound}, we obtain $\hat{\psi}^{(\gamma)}_m = \psi_{\dagger}^{(\gamma)} + O_p(m^{-1/2})$ and $\psi_{\dagger}^{(\gamma)} = \psi_{\ast} + O\left(\omega \rho(\psi_{\ast},\ga,\delta) \right)$ for any $\ga \in \Gamma \setminus \{0\}$. Similar to Lemma \ref{conv-mu_tau}, for any $\gamma \in \Gamma$ we have
$$
\sup_{\psi \in \Psi} \left| \hat{Q}_m(\ga;\psi) - \tilde{Q}(\ga;\psi) \right| = o_p(1),
$$
where $\tilde{Q}(\ga;\psi) \equiv  E_{f_{\psi_{\ast}}}\left[ \phi(y_i;x_i^\top \beta,A+D_i^2)^{\gamma}\frac{D_i^2}{(A+D_i)^2}\left\{\ga(y_i-x_i^\top \beta)^2- (A+D_i)\right\}c_{\gamma}(A) \right]$. For any $\ga \in \Gamma$, we obtain
$$
\hat{Q}_{m}\left( \ga; \hat{\psi}_m^{(\ga)} \right)  \to_p \tilde{Q} \left(\ga; \psi_{\dagger}^{(\gamma)} \right).
$$
Because $\tilde{Q}(\ga;\psi)$ is differentiable in $\psi$ and $\psi_{\dagger}^{(\gamma)} = \psi_{\ast} + O\left(\omega \rho(\psi_{\ast},\ga,\delta) \right)$, $\tilde{Q} \left(\ga; \psi_{\dagger}^{(\gamma)} \right) = \tilde{Q} \left(\ga; \psi_{\ast} \right) + O\left(\omega \rho(\psi_{\ast},\ga,\delta) \right)$ holds for any $\ga \in \Gamma \setminus \{0\}$.

Next, we consider $\ga = 0$. Let $\psi_{\dagger}^{(0)} \equiv (\beta_{\dagger}, A_{\dagger})$ be the probability limit of $\hat{\psi}_m^{(0)} \equiv (\beh_{0}, \Ah_{0})$ under $\theta_i \sim f(\theta_i)$. Similar to the above argument, we have
$$
\hat{Q}_m(0;\hat{\psi}_m^{(0)}) = - \frac{1}{m} \sum_{i=1}^m \frac{D_i^2}{\hat{A}_{0} +D_i} \stackrel{p}{\to} - E \left[  \frac{D_i^2}{A_{\dagger} +D_i} \right].
$$
Then, $(\beta_{\dagger}, A_{\dagger})$ satisfies the following equations:
\begin{align*}
    E_{f_{\psi_{\ast}}}\left[ \frac{x_i (y_i - x_i^\top \beta_{\dagger})}{A_{\dagger} + D_i} \right] = 0 \ \ \text{and} \ \ E_{f_{\psi_{\ast}}}\left[ \frac{(y_i- x_i \top \beta_{\dagger})^2 - (A_{\dagger} + D_i) }{(A_{\dagger} + D_i)^2} \right] = 0.
\end{align*}
Because $E_{f_{\psi_{\ast}}}[y_i|x_i, D_i] = (1-\omega) x_i^\top \beta_{\ast} + \omega x_i^\top v_{\mu_{\delta}}$ where $v_{\mu_{\delta}} \equiv (\mu_{\delta}, 0, \ldots, 0)^\top$, the first equality implies
\begin{align*}
    & \ \ 0 = E\left[ \frac{x_i x_i^\top \{(1-\omega) \beta_{\ast} + \omega v_{\mu_{\delta}} - \beta_{\dagger}\}}{A_{\dagger} + D_i} \right] \\
    \Leftrightarrow & \ \ \beta_{\dagger} = (1-\omega)\beta_{\ast} + \omega v_{\mu_{\delta}},
\end{align*}
where the second equation follows from the non-singularity of $E\left[ \frac{x_i x_i^\top}{A_{\dagger} + D_i} \right]$. Then, we obtain
\begin{align*}
    & E_{f_{\psi_{\ast}}}\left[ \frac{(y_i- x_i^\top \beta_{\dagger})^2}{(A_{\dagger} + D_i)^2} \Big| x_i, D_i \right] \\
    &= \frac{(1-\omega)\left\{ (A_{\ast} + D_i) + \{x_i^\top( \beta_{\dagger} - \beta_{\ast})\}^2 \right\} + \omega \left\{ (\tau_{\delta}^2 + D_i) + \{x_i^\top (\beta_{\dagger} - v_{\mu_{\delta}})\}^2 \right\}  }{(A_{\dagger} + D_i)^2} \\
    &= \frac{(1-\omega)(A_{\ast} + D_i) + \omega (\tau_{\delta}^2 + D_i) + \left\{ \omega^2 + (1-\omega)^2 \right\} \{x_i^\top (\beta_{\ast} - v_{\mu_{\delta}})\}^2 }{(A_{\dagger} + D_i)^2}.
\end{align*}
Combined with the second equation, $A_{\dagger}$ must satisfy the following equation:
$$
E_{f_{\psi_{\ast}}} \left[ \frac{(1-\omega)(A_{\ast} + D_i) + \omega (\tau_{\delta}^2 + D_i) + \left\{ \omega^2 + (1-\omega)^2 \right\} \{x_i^\top (\beta_{\ast} - v_{\mu_{\delta}})\}^2 - (A_{\dagger} + D_i) }{(A_{\dagger} + D_i)^2} \right] = 0.
$$
Because $D_L \leq D_i \leq D_U$ and $\omega \in (0,1/2)$, we have
\begin{align*}
    & (1-\omega)(A_{\ast} + D_i) + \omega (\tau_{\delta}^2 + D_i) + \left\{ \omega^2 + (1-\omega)^2 \right\} \{x_i^\top (\beta_{\ast} - v_{\mu_{\delta}})\}^2 - (A_{\dagger} + D_i) \\
    = & \ \omega (\tau_{\delta}^2 + D_i) + (A_{\ast} - A_{\dagger}) + \left\{ 2 (\omega - 1/2)^2 + 1/2 \right\} \{x_i^\top (\beta_{\ast} - v_{\mu_{\delta}})\}^2 - \omega (A_{\ast} + D_i) \\
    \geq & \ \omega D_L + (A_{\ast} - A_{\dagger}) + \frac{1}{2} \{x_i^\top (\beta_{\ast} - v_{\mu_{\delta}})\}^2 - \frac{1}{2} (A_{\ast} + D_U).
\end{align*}
Hence, we obtain
\begin{align*}
    & 0 \geq E\left[ (A_{\dagger} + D_i)^{-2} \right] \cdot \left\{ \omega D_L + (A_{\ast} - A_{\dagger}) \right\} + \frac{1}{2} E\left[ \frac{(x_i^\top \beta_{\ast} - \mu_{\delta})^2 - (A_{\ast} + D_U)}{(A_{\dagger} + D_i)^2} \right] \\
    \Leftrightarrow \ \ & A_{\dagger} - A_{\ast} \geq \omega D_L,
\end{align*}
where we use $E[(x_i^\top \beta_{\ast} - \mu_{\delta})^2 | D_i] \geq A_{\ast} + D_U$. As a result, we obtain
$$
\tilde{Q}\left( 0;\psi_{\dagger}^{(0)} \right) \equiv  - E\left[ \frac{D_i^2}{A_{\dagger} +D_i} \right] > \tilde{Q}(0;\psi_{\ast}) \equiv  - E\left[ \frac{D_i^2}{A_{\ast} +D_i} \right].
$$
We note that the difference is strictly positive uniformly with respect to $\delta$ that satisfies $E[(x_i^\top \beta_{\ast} - \mu_{\delta})^2 | D_i] \geq A_{\ast} + D_U$.

From the definition of $\rho(\psi_{\ast},\ga,\delta)$, for any $\ga \in \Gamma \setminus \{0\}$ we observe that
\begin{align*}
    \tilde{Q}(\ga;\psi_{\ast}) &= E_{f_{\psi_{\ast}}}\left[ w_{\ga}(y_i) \frac{D_i^2}{(A_{\ast}+D_i)^2}\left\{\ga(y_i-x_i^\top \beta_{\ast})^2- (A_{\ast}+D_i)\right\}c_{\gamma}(A_{\ast}) \right] \\
    &= (1-\omega) E_{\psi_{\ast}}\left[ w_{\ga}(y_i)\frac{D_i^2}{(A_{\ast}+D_i)^2}\left\{\ga(y_i-x_i^\top \beta_{\ast})^2- (A_{\ast}+D_i)\right\}c_{\gamma}(A_{\ast}) \right] \\
    & \hspace{2.0in} + O\left(\omega \rho(\psi_{\ast},\ga,\delta) \right) \\
    &= (1-\omega) Q(\ga;\psi_{\ast}) + O\left(\omega \rho(\psi_{\ast},\ga,\delta) \right),
\end{align*}
where $E_{\psi_{\ast}}$ denotes the expectation under $y_i | x_i,D_i \sim N(x_i^\top \beta_{\ast},A_{\ast} + D_i)$. As discussed in the proof of Theorem \ref{gamma_op}, $Q(\ga;\psi_{\ast})$ is continuous in $\ga$ and $Q(\ga;\psi_{\ast}) > Q(0;\psi_{\ast}) = \tilde{Q}(0;\psi_{\ast})$. As a result, we obtain
\begin{align*}
    \tilde{Q} \left(\ga_{\ast}; \psi_{\dagger}^{(\gamma_{\ast})} \right) &= \tilde{Q} \left(\ga_{\ast}; \psi_{\ast} \right) + O\left(\omega \rho(\psi_{\ast},\ga_{\ast},\delta) \right) = (1-\omega) Q \left(\ga_{\ast}; \psi_{\ast} \right) + O\left(\omega \rho(\psi_{\ast},\ga_{\ast},\delta) \right) \\
    & < Q \left(0; \psi_{\ast} \right) + O\left(\omega \rho(\psi_{\ast},\ga_{\ast},\delta) \right) < \tilde{Q}\left( 0;\psi_{\dagger}^{(0)} \right) + O\left(\omega \rho(\psi_{\ast},\ga_{\ast},\delta) \right).
\end{align*}
Hence, if $\rho(\psi_{\ast},\ga_{\ast},\delta)$ is sufficiently small, then we have $P(\gamma_{op}>0)\to 1$ as $m\to\infty$.

\subsection{Proof of Theorem \ref{coverage-conv}}

The following lemma (Nazarov's inequality) plays an important role to show the asymptotic validity of the proposed confidence intervals. 

\begin{lemma}\label{Nazarov-ineq}
(\cite{Nazarov2003}; Lemma A.1 in \cite{CCK2017}).
Let $Y = (Y_1,\hdots,Y_d)^{\top}$ be a centered Gaussian random vector in $\mathbb{R}^d$ such that $E[Y_{i}^{2}] \geq \si_0^2$ for all $i=1,\hdots, p$ and some constant $\si_0>0$. Then for every $y \in \mathbb{R}^d$ and $\delta>0$, 
\begin{align*}
    P(Y \leq y + \delta) - P(Y \leq y) &\leq \frac{\delta}{\si_0}(\sqrt{2\log d} + 2). 
\end{align*}
\end{lemma}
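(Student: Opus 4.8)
Since this is a verbatim restatement of Lemma~A.1 of \cite{CCK2017} (itself due to \cite{Nazarov2003}), the simplest course is to invoke that reference; below I outline the argument one would reconstruct. The first step is to reduce to the unit-variance case: replacing $Y$, $y$, $\delta$ by $Y/\si_0$, $y/\si_0$, $\delta/\si_0$ leaves both sides invariant, so I may assume $E[Y_j^2]\ge 1$ for every $j$ and aim for the bound $\delta(\sqrt{2\log d}+2)$. Setting $Z=\max_{1\le j\le d}(Y_j-y_j)$, the left-hand side equals $P(0<Z\le\delta)$, so everything reduces to an anti-concentration estimate for the maximum of a centered Gaussian vector whose marginal variances are all at least $1$: I would show that the law of $Z$ puts mass at most $\delta(\sqrt{2\log d}+2)$ on any interval of length $\delta$.

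For that estimate the plan is the usual smoothing/Gaussian-calculus route. I would replace the hard maximum by the log-sum-exp surrogate $F_\be(z)=\be^{-1}\log\sum_{j=1}^d e^{\be z_j}$, which satisfies $\max_j z_j\le F_\be(z)\le\max_j z_j+\be^{-1}\log d$, has gradient the probability vector $\pi(z)$ with $\pi_j=e^{\be z_j}/\sum_k e^{\be z_k}$, and Hessian $\be(\mathrm{diag}(\pi)-\pi\pi^\top)$; then smooth the indicator of $(-\infty,0]$ by a non-increasing $C^1$ function $g$ and study $\Psi(u)=E[g(F_\be(Y-y)-u)]$. Differentiating in $u$ gives $\Psi'(u)=-E[g'(F_\be(Y-y)-u)]=E[|g'(F_\be(Y-y)-u)|]$, and the crux is to bound this by $O(\be)$ uniformly, even as $g'$ is pushed toward a point mass. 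This is where I would bring in Gaussian integration by parts, $E[Y_j\,\partial_jG(Y)]=\sum_k\Sigma_{jk}E[\partial_{jk}G(Y)]$, applied to a functional built from $g$ and $F_\be$: combined with the explicit Hessian of $F_\be$ and its convexity, it exchanges the factor $g'$ for $g$ at the price of a multiplicative $\be$, which is exactly what breaks the apparent circularity. Choosing $\be=\sqrt{2\log d}$, sending the smoothing scale to zero, and integrating $\Psi'$ over $[0,\delta]$ would then produce $\delta(\sqrt{2\log d}+2)$, the extra $2$ absorbing the $\be^{-1}\log d$ gap between $F_\be$ and the genuine maximum together with the residual smoothing error. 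A self-contained alternative is to represent $P(0<Z\le\delta)$ as a surface integral over the faces of the orthant $\{x\le y\}$, namely $\int_0^\delta\sum_j\varphi_{\si_j}(y_j+t)\,P(Y_k\le y_k+t\ \forall k\ne j\mid Y_j=y_j+t)\,dt$, and then exploiting the trade-off that the more Gaussian densities $\varphi_{\si_j}(y_j+t)$ are simultaneously of order one, the smaller the remaining conditional orthant probabilities must be.

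The hard part is this anti-concentration step. A direct bound on the density of $F_\be(Y-y)$ is self-referential, and the standard Lipschitz/Gaussian-concentration inequality does not by itself yield anti-concentration and in any case operates at the scale of the largest marginal standard deviation rather than the smallest $\si_0$; the integration-by-parts identity above is what makes it go through. Extracting the sharp constant $\sqrt{2\log d}+2$, rather than an unspecified $C\sqrt{\log d}$, then demands a careful joint choice of $\be$ and of the smoother together with honest tracking of the error terms --- which is precisely the content of the ``detailed proof'' in \cite{CCK2017}, so in the paper I would simply cite that result.
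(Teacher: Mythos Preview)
Your proposal is correct and aligns with the paper: the paper itself gives no proof of this lemma and simply records it with attribution to \cite{Nazarov2003} and \cite{CCK2017}, so your decision to invoke those references is exactly what the paper does. The sketch you add (reduction to $\si_0=1$, rewriting the left-hand side as $P(0<\max_j(Y_j-y_j)\le\delta)$, then smoothing via log-sum-exp and applying Gaussian integration by parts, or alternatively the surface-integral representation) is a faithful outline of the argument in \cite{CCK2017}, but none of it is needed here since the paper treats the lemma as a black-box citation.
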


Now we move on to the proof of Theorem \ref{coverage-conv}. Recall that $\widehat{I}_{i,1-\alpha}^{(\gamma)} = (\thh_i^{(\gamma)}-z_{\alpha/2}\sh_i^{(\gamma)}, \thh_i^{(\gamma)}+z_{\alpha/2}\sh_i^{(\gamma)})$. Then $\widehat{I}_{i,1-\alpha}^{(0)} = \widehat{I}_{i,1-\alpha}^B$ (see (\ref{CI}) for the definition of $\widehat{I}_{i,1-\alpha}^B$). Since $P(\gamma_{op} = 0) \to 1$ as $m \to \infty$, we have
\begin{align*}
    P\left(\theta_i\in \widehat{I}_{i,1-\alpha}^{(\gamma_{op})}\right) &= P\left(\left\{\theta_i\in \widehat{I}_{i,1-\alpha}^{(\gamma_{op})}\right\} \cap \left\{\gamma_{op} = 0\right\}\right) + P\left(\left\{\theta_i\in \widehat{I}_{i,1-\alpha}^{(\gamma_{op})}\right\} \cap \left\{\gamma_{op} > 0\right\}\right)\\
    &\leq P\left(\theta_i\in \widehat{I}_{i,1-\alpha}^B\right) + P(\gamma_{op}>0) = P\left(\theta_i\in \widehat{I}_{i,1-\alpha}^B\right) + o(1). 
\end{align*}
Likewise, we have $P\left(\theta_i\in \widehat{I}_{i,1-\alpha}^{(\gamma_{op})}\right) \geq P\left(\theta_i\in \widehat{I}_{i,1-\alpha}^B\right) - o(1)$.

Define $\widetilde{C}_{i,1-\alpha}$ by replacing $\thh_{i}$ and $\sh_i$ with $\tht_i$ and $\st_i$, respectively where $\tht_i$ and $\st_i$ are defined in (\ref{pos}). Note that the standard posterior is given as $\theta_i|y_i,x_i,D_i \sim N(\tht_i,\st_i^2)$. Then
\begin{align*}
    P\left(\theta_{i} \in \widetilde{C}_{i,1-\alpha}\right) &= P\left(\left|\st_i^{-1}(\th_i-\tht_i)\right| \leq z_{\alpha/2}\right) = E\left[P\left(\left. \left|\st_i^{-1}(\th_i-\tht_i)\right| \leq z_{\alpha/2}\right|y_i,x_i, D_i\right)\right]\\
    &= E[\Phi(z_{\alpha/2}) - \Phi(z_{1-\alpha/2})] =1-\alpha, 
\end{align*}
where $\Phi(\cdot)$ is the cumulative distribution function of the standard normal distribution.
Now we will show $P\left(\th_i \in \widehat{I}_{i,1-\alpha}^B\right) = P\left(\th_i \in \widetilde{C}_{i,1-\alpha}\right) + o(1)$. 
Since $\Ah \stackrel{p}{\to} A_{\ast}$ and $\beh \stackrel{p}{\to} \beta_{\ast}$ as $m \to \infty$ from Lemma \ref{conv-mu_tau}, there exists a sequence of constants $\epsilon_{m} \to 0$ such that $P(\Omega_{m}) \to 1$  where $\Omega_{m} = \{\max\{|\Ah - A_{\ast}|, \| \beh - \beta_{\ast}\| \} \leq \epsilon_{m}\}$. Observe that 
 \begin{align}
     &P\left(\th_i \leq \thh_{i} + \sh_i z_{\alpha/2}\right) \nonumber \\ 
     &= P\left(\th_i \leq \tht_i + \st_i z_{\alpha/2} + (\thh_{i} - \tht_i) + (\sh_i-\st_i) z_{\alpha/2}\right) \nonumber \\
     &= P\left(\left\{\th_i \leq \tht_i + \st_i z_{\alpha/2} + (\thh_{i} - \tht_i) + (\sh_i-\st_i) z_{\alpha/2}\right\}\cap \Omega_{m} \cap \{|y_{i} - x_i^\top \beta_{\ast}| \leq \epsilon_{m}^{-1/2}\}\right) \nonumber \\
     &\quad + P\left(\left\{\th_i \leq \tht_i + \st_i z_{\alpha/2} + (\thh_{i} - \tht_i) + (\sh_i-\st_i) z_{\alpha/2}\right\}\cap \Omega_{m} \cap \{|y_{i} - x_i^\top \beta_{\ast}| > \epsilon_{m}^{-1/2}\}\right) \nonumber \\
     &\quad + P\left(\left\{\th_i \leq \tht_i + \st_i z_{\alpha/2} + (\thh_{i} - \tht_i) + (\sh_i-\st_i) z_{\alpha/2}\right\}\cap \Omega_{m}^{c}\right) \nonumber \\
     &\leq P\left(\left\{\th_i \leq \tht_i + \st_i z_{\alpha/2} + (\thh_{i} - \tht_i) + (\sh_i-\st_i) z_{\alpha/2}\right\}\cap \Omega_{m} \cap \{|y_{i} - x_i^\top \beta_{\ast}| \leq \epsilon_{m}^{-1/2}\}\right) \nonumber \\
     &\quad + P(|y_{i} - x_i^\top \beta_{\ast}| > \epsilon_{m}^{-1/2}) + P(\Omega_{m}^{c}) \nonumber \\
     &= P\left(\left\{\th_i \leq \tht_i + \st_i z_{\alpha/2} + (\thh_{i} - \tht_i) + (\sh_i-\st_i) z_{\alpha/2}\right\}\cap \Omega_{m} \cap \{|y_{i} - x_i^\top \beta_{\ast}| \leq \epsilon_{m}^{-1/2}\}\right) + o(1). \label{thh-bonund-coverage}
 \end{align}
On $\Omega_{m} \cap \{|y_{i} - x_i^\top \beta_{\ast}| \leq \epsilon_{m}^{-1/2}\}$, we have 
 \begin{align}
     |\thh_i - \tht_i| &\leq \left|\frac{D_i}{\Ah + D_i} - \frac{D_i}{A_{\ast} + D_i}\right||y_i - x_i^\top \beta_{\ast}| + \frac{D_i}{A_{\ast} + D_i}|x_i^\top \beh - x_i^\top \beta_{\ast}| \nonumber \\
     &\quad + \left|\frac{D_i}{\Ah + D_i} - \frac{D_i}{A_{\ast} + D_i}\right||x_i^\top \beh - x_i^\top \beta_{\ast}| \nonumber \\
     &\leq \frac{D_U\epsilon_m}{(A_{\ast}+D_{L}-\epsilon_m)^{2}}\epsilon_{m}^{-1/2} + \frac{D_{U}}{A_{\ast}+D_{L}}\epsilon_m + \frac{D_U\epsilon_m^{2}}{(A_{\ast}+D_{L}-\epsilon_m)^{2}} \nonumber \\
     &\leq C_{1}(D_L, D_U,A_{\ast})\epsilon_{m}^{1/2},
     \label{th-bound-coverage}
 \end{align}
and 
\begin{align}
    \left|\sh_i-\st_i\right| &\leq \left|\sh_i^2-\st_i^2\right|\frac{1}{|\sh_i+\st_i|}\nonumber \\
    &\leq \left(\frac{D_i|\Ah - A_{\ast}|}{A_{\ast} + D_i} + D_iA_{\ast}\left|\frac{1}{\Ah + D_i} - \frac{1}{A_{\ast} + D_i}\right| \right. \nonumber \\
    &\left. \quad + D_i\left|\Ah - A_{\ast}\right|\left|\frac{1}{\Ah + D_i} - \frac{1}{A_{\ast} + D_i}\right|\right)\frac{1}{2|\st_i|-|\sh_i-\st_i|} \nonumber \\
    &\leq \left(\frac{D_U\epsilon_m}{A_{\ast} + D_L} + \frac{D_UA_{\ast}\epsilon_{m}}{(A_{\ast} + D_L-\epsilon_m)^{2}} + \frac{D_U\epsilon_{m}^{2}}{(A_{\ast} + D_L-\epsilon_m)^{2}}\right)\frac{1}{2|\st_i|-|\sh_i-\st_i|} \nonumber \\
    &\leq C_{2}(D_L, D_U,A_{\ast})\epsilon_m \label{si-bound-coverage}
\end{align}
for some positive constants $C_{1}(D_L, D_U,A_{\ast})$ and $C_{2}(D_L, D_U,A_{\ast})$. 
Combining (\ref{th-bound-coverage}), (\ref{si-bound-coverage}), and Lemma \ref{Nazarov-ineq}, we have
 \begin{align}
    &P\left(\left. \left\{\th_i \leq \tht_i + \st_i z_{\alpha/2} + (\thh_{i} - \tht_i) + (\sh_i-\st_i) z_{\alpha/2}\right\}\cap \Omega_{m} \cap \{|y_{i} - x_i^\top \beta_{\ast}| \leq \epsilon_{m}^{-1/2}\}\right|y_i,x_i,D_i\right) \nonumber \\
    &\leq P\left(\left.\th_i \leq \tht_i + \st_i z_{\alpha/2} + C_{1}(D_L, D_U,A_{\ast})\epsilon_{m}^{1/2} + C_{2}(D_L, D_U,A_{\ast})\epsilon_m z_{\alpha/2}\right|y_i,x_i,D_i\right) \nonumber \\
    &\leq P\left(\left.\th_i \leq \tht_i + \st_i z_{\alpha/2} \right|y_i,x_i,D_i\right) + C_{3}(D_L, D_U,A_{\ast})\epsilon_{m}^{1/2} \nonumber \\
    & = 1-\alpha/2 + C_{3}(D_L, D_U,A_{\ast})\epsilon_{m}^{1/2} \label{tht-bound-coverage}
 \end{align}
 for some positive constant $C_{3}(D_L, D_U,A_{\ast})$. Combining (\ref{thh-bonund-coverage}) and (\ref{tht-bound-coverage}), we have $P\left(\th_i \leq \thh_{i} + \sh_i z_{\alpha/2}\right) \leq 1-\alpha/2 + o(1)$. Likewise, we have $P\left(\th_i \leq \thh_{i} + \sh_i z_{\alpha/2}\right) \geq 1-\alpha/2 - o(1)$. Almost the same argument yields $P\left(\th_i \leq \thh_{i} - \sh_i z_{\alpha/2}\right) = \alpha/2 + o(1)$. Therefore, we have
 \begin{align*}
     P\left(\th_i \in \widehat{I}_{i,1-\alpha}^{(\gamma_{op})}\right) = 1-\alpha + o(1). 
 \end{align*}
 
 \subsection{Proof of Theorem \ref{contami-coverage}}

To investigate the asymptotic properties of the confidence interval $\widehat{I}_{i,1-\alpha}^{(\gamma)}$, we first prepare two auxiliary results. From Lemma \ref{lemma: psudo-psi-conv}, $\psih^{(\gamma)}$ converge to a "psudo-true" parameter $\psi_{\dagger}^{(\gamma)}$, and from Lemma \ref{lemma: latent-bias-bound}, $\psi_{\dagger}^{(\gamma)}$ has a bias $O(\omega\rho_{\psi_{\ast}})$ with respect to the true parameter $\psi_{\ast}$.

\begin{lemma}\label{lemma: psudo-psi-conv}
(Theorem 5.41 and 5.42 of \cite{vanderVaart1998}; modified version of Theorem 4.1 of \cite{Fujisawa2013}).
Suppose that $\gamma > 0$ and $z_1=(y_1,x_1,D_1),\hdots, z_m=(y_m,x_m,D_m)$ are i.i.d samples from the model $y_i|\theta_i,D_i \sim N(\theta_i,D_i)$, $\theta_i \sim f_{\psi^\ast}(\theta_i)$ and that there exist positive constants $D_{L}$ and $D_{U}$ such that $0< D_{L} \leq D_i \leq D_{U}<\infty$.
We assume: (a) The function $\partial_{\psi} \mathcal{D}_{\gamma}(z;\psi)$ is twice continuously differentiable with respect to $\psi$ for any $z = (y,x,D) \in \mathbb{R}\times \mathcal{X} \times [D_L,D_U]$. (b) There exists a root $\psi_{\dagger}^{(\gamma)}$ such that $E_{f_{\psi_{\ast}}}[\partial_{\psi} \mathcal{D}_{\gamma}(z;\psi)]=0$. (c) $E_{f_{\psi_{\ast}}}\left[\|\partial_{\psi} \mathcal{D}_{\gamma}(z;\psi)\|^{2}\right]<\infty$. (d) $E_{f_{\psi_{\ast}}}\left[\partial (\partial_{\psi} \mathcal{D}_{\gamma}(z;\psi_{\dagger}^{(\gamma)}))/\partial \psi^{\top}\right]$ exists and is nonsingular. (e) The second-order differentials of $\partial_{\psi} \mathcal{D}_{\gamma}(z;\psi)$ with respect to $\psi$ are dominated by a fixed integrable function $\overline{D}(z)$ in a neighborhood of $\psi = \psi_{\dagger}^{(\gamma)}$. Then there exists a sequence of roots $\{\psih_{m}^{(\gamma)}\}_{m=1}^{\infty}$, such that 
\begin{itemize}
    \item[(i)] $\psih_{m}^{(\gamma)} \stackrel{p}{\to} \psi_{\dagger}^{(\gamma)}$,
    \item[(ii)] $\sqrt{m}(\psih_{m}^{(\gamma)} - \psi_{\dagger}^{(\gamma)}) \stackrel{d}{\to} N(0, \tau_f^{2}(\psi_{\dagger}^{(\gamma)}))$,
\end{itemize}
where $\tau_f^{2}(\psi) = J_{f}(\psi)K_f(\psi)\{J_f(\psi)^{\top}\}^{-1}$, $J_f(\psi)=E_{f_{\psi_{\ast}}}\left[\partial (\partial_{\psi} \mathcal{D}_{\gamma}(x;\psi))/\partial \psi^{\top}\right]$, $K_f(\psi)=E_{f_{\psi_{\ast}}}[\partial_{\psi} \mathcal{D}_{\gamma}(z;\psi)\partial_{\psi} \mathcal{D}_{\gamma}(z;\psi)^{\top}]$. 
\end{lemma}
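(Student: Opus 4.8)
The plan is to treat this as a textbook consistency-plus-asymptotic-normality result for a $Z$-estimator (an estimator defined through an estimating equation), and to run the classical Taylor-expansion argument in the style of Theorems~5.41--5.42 of \cite{vanderVaart1998}, after checking that the $\gamma$-divergence score $\partial_{\psi}\mathcal{D}_{\gamma}$ built from the Gaussian Fay--Herriot marginal with bounded $D_i$ slots into that machinery. Concretely I would proceed in three steps: (1) produce a sequence of roots $\psih_{m}^{(\gamma)}$ of $m^{-1}\sum_{i=1}^{m}\partial_{\psi}\mathcal{D}_{\gamma}(z_i;\psi)=0$ that is consistent for the pseudo-true value $\psi_{\dagger}^{(\gamma)}$; (2) expand the estimating equation to second order around $\psi_{\dagger}^{(\gamma)}$; (3) combine a CLT for the ``score'' term with a uniform law of large numbers for the ``Hessian'' term and finish with Slutsky's lemma.

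For step~(1), fix a compact neighborhood $N$ of $\psi_{\dagger}^{(\gamma)}$ inside $\Psi$. By condition~(a) the map $\psi\mapsto\partial_{\psi}\mathcal{D}_{\gamma}(z;\psi)$ is continuous, and by conditions~(a) and~(e) it has an integrable envelope over $N$ (here one uses the explicit Gaussian form of $\mathcal{D}_{\gamma}$ together with $0<D_L\le D_i\le D_U<\infty$, exactly as in the proof of Lemma~\ref{unif_LLN}); hence $\{z\mapsto\partial_{\psi}\mathcal{D}_{\gamma}(z;\psi):\psi\in N\}$ is Glivenko--Cantelli and $\sup_{\psi\in N}\|m^{-1}\sum_{i}\partial_{\psi}\mathcal{D}_{\gamma}(z_i;\psi)-E_{f_{\psi_{\ast}}}[\partial_{\psi}\mathcal{D}_{\gamma}(z_1;\psi)]\|\stackrel{p}{\to}0$. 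Since $E_{f_{\psi_{\ast}}}[\partial_{\psi}\mathcal{D}_{\gamma}(z_1;\cdot)]$ vanishes at $\psi_{\dagger}^{(\gamma)}$ by condition~(b) with nonsingular derivative $J_f(\psi_{\dagger}^{(\gamma)})$ by condition~(d), the point $\psi_{\dagger}^{(\gamma)}$ is an isolated zero, and a standard localization (Brouwer-degree / inverse-function) argument, which is the device behind Theorem~5.9 of \cite{vanderVaart1998} and Theorem~4.1 of \cite{Fujisawa2013}, yields, with probability tending to one, a root $\psih_{m}^{(\gamma)}\in N$ with $\psih_{m}^{(\gamma)}\stackrel{p}{\to}\psi_{\dagger}^{(\gamma)}$. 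This is claim~(i).

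For steps~(2)--(3), write $0=m^{-1}\sum_{i}\partial_{\psi}\mathcal{D}_{\gamma}(z_i;\psih_{m}^{(\gamma)})$ and Taylor-expand around $\psi_{\dagger}^{(\gamma)}$ using condition~(a); the second-order remainder is controlled by the dominating function $\overline{D}$ of condition~(e) together with the consistency from step~(1), so that $0=m^{-1}\sum_{i}\partial_{\psi}\mathcal{D}_{\gamma}(z_i;\psi_{\dagger}^{(\gamma)})+\widehat{J}_m(\psih_{m}^{(\gamma)}-\psi_{\dagger}^{(\gamma)})$ with $\widehat{J}_m\stackrel{p}{\to}J_f(\psi_{\dagger}^{(\gamma)})$, again by the uniform law of large numbers behind Lemma~\ref{unif_LLN} applied to the second derivatives plus consistency. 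By conditions~(b) and~(c) the i.i.d.\ summands $\partial_{\psi}\mathcal{D}_{\gamma}(z_i;\psi_{\dagger}^{(\gamma)})$ have mean zero and finite second moment $K_f(\psi_{\dagger}^{(\gamma)})$, so the Lindeberg--L\'evy CLT gives $m^{-1/2}\sum_{i}\partial_{\psi}\mathcal{D}_{\gamma}(z_i;\psi_{\dagger}^{(\gamma)})\stackrel{d}{\to}N(0,K_f(\psi_{\dagger}^{(\gamma)}))$. Since $J_f(\psi_{\dagger}^{(\gamma)})$ is nonsingular, solving for $\psih_{m}^{(\gamma)}-\psi_{\dagger}^{(\gamma)}$ and applying Slutsky's lemma gives $\sqrt{m}(\psih_{m}^{(\gamma)}-\psi_{\dagger}^{(\gamma)})\stackrel{d}{\to}N(0,\tau_f^{2}(\psi_{\dagger}^{(\gamma)}))$ in the sandwich form stated in the lemma, which is claim~(ii).

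The only step that is not purely mechanical is step~(1): unlike the $M$-estimator of Lemma~\ref{conv-mu_tau}, which is an argmax over the compact set $\Psi$ and is automatically well defined, a $Z$-estimator need not possess a root, and the ``correct'' root has to be pinned down by localizing near $\psi_{\dagger}^{(\gamma)}$ and inverting the perturbed estimating function there using the nonsingularity of $J_f(\psi_{\dagger}^{(\gamma)})$; this is exactly where we lean on the existing results of \cite{vanderVaart1998} and \cite{Fujisawa2013}. Everything after consistency --- the Taylor expansion, the CLT, and the Slutsky step --- is routine, and the required moment and domination conditions are supplied directly by assumptions~(a)--(e).
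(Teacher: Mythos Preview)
Your proposal is correct. The paper itself does not prove this lemma at all: it is stated as a direct citation of Theorems~5.41--5.42 of \cite{vanderVaart1998} and Theorem~4.1 of \cite{Fujisawa2013}, with the only accompanying remark being that assumptions~(a)--(e) can be verified for the $\gamma$-divergence score. What you have written is precisely the classical argument that underlies those cited theorems --- existence and consistency of a local root via a Glivenko--Cantelli/isolated-zero argument, followed by a second-order Taylor expansion, Lindeberg--L\'evy CLT on the score, and Slutsky --- so your approach and the paper's (implicit) approach coincide.

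One cosmetic remark on step~(1): the integrable envelope for $\partial_{\psi}\mathcal{D}_{\gamma}(z;\psi)$ over $N$ comes from condition~(c) (square-integrability of the score) together with the explicit Gaussian form and $D_L\le D_i\le D_U$, rather than from condition~(e), which only dominates the \emph{second} derivatives of the score and is what you need later in step~(2) to control the Taylor remainder. This does not affect the validity of your argument.
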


Assumptions (a)-(e) in Lemma \ref{lemma: psudo-psi-conv} can be verified for the estimators based on $\gamma$-divergence (see \cite{Fujisawa2013} for example).

In the following results, we suppose that the maximization of the objective function in (\ref{est}) is performed on the set $\Psi_{m} = \{\psi: \rho_{\psi}^{1/\gamma} \leq K\rho_{\psi^{\ast}}^{1/\gamma}\}$ such that $\psi_{\ast}, \psi_{\dagger}^{(\gamma)} \in \Psi_{m}$, where $K$ is a positive constant and $\psi_{\ast} = ( \beta_{\ast},A_{\ast})$ and $\rho_{\psi}$ is defined in the same way as $\rho_{\psi_{\ast}}$ by replacing $\psi_{\ast}$ with $\psi=(\beta,A)$.  
\begin{lemma}\label{lemma: latent-bias-bound}
(Theorem 3.3 of \cite{fujisawa2008robust}).
Suppose that $\psi_{\ast}, \psi_{\dagger}^{(\gamma)} \in \Psi_m$. Then it holds that $\psi_{\dagger}^{(\gamma)} = \psi_{\ast} + O(\omega \rho_{\psi_{\ast}})$.  
\end{lemma}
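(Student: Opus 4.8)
The plan is to derive the bound by linearizing the population $\ga$-divergence estimating equation around the true value; this is the regression-model counterpart of the latent-bias analysis of \cite{fujisawa2008robust}, Theorem~3.3. Recall that $\psi_{\dagger}^{(\ga)}$ solves $E_{f_{\psi_{\ast}}}[\partial_{\psi}\mathcal{D}_{\ga}(y_i;\psi)]=0$ and that, under $\theta_i\sim f_{\psi_{\ast}}(\theta_i)$, the law of $y_i$ given $(x_i,D_i)$ is $(1-\omega)\phi(\cdot\,;x_i^\top\be_{\ast},A_{\ast}+D_i)+\omega f_{\delta}(\cdot)$, so that $E_{f_{\psi_{\ast}}}[\,\cdot\,]=(1-\omega)E_{\psi_{\ast}}[\,\cdot\,]+\omega E_{\delta}[\,\cdot\,]$. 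The first thing I would do is record the ``clean-component Fisher consistency'' identity $E_{\psi_{\ast}}[\partial_{\psi}\mathcal{D}_{\ga}(y_i;\psi_{\ast})]=0$, which is already contained in the proof of Lemma~\ref{conv-mu_tau}: conditionally on $(x_i,D_i)$ the $\be$-part of $\partial_{\psi}\mathcal{D}_{\ga}(y_i;\psi_{\ast})$ is an odd function of $y_i-x_i^\top\be_{\ast}$ times the even weight $\phi(y_i;x_i^\top\be_{\ast},A_{\ast}+D_i)^{\ga}c_{\ga}(A_{\ast})$, and the $A$-part cancels by the moment identities~(\ref{expect}). Hence all the bias comes from the contamination: $E_{f_{\psi_{\ast}}}[\partial_{\psi}\mathcal{D}_{\ga}(y_i;\psi_{\ast})]=\omega\,E_{\delta}[\partial_{\psi}\mathcal{D}_{\ga}(y_i;\psi_{\ast})]$.

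Next I would bound this contamination term by $O(\rho_{\psi_{\ast}})$. The $\ga$-score $\partial_{\psi}\mathcal{D}_{\ga}(y_i;\psi_{\ast})$ is $\phi(y_i;x_i^\top\be_{\ast},A_{\ast}+D_i)^{\ga}c_{\ga}(A_{\ast})$ times a vector whose entries are polynomials in $y_i-x_i^\top\be_{\ast}$ of degree at most two, with coefficients bounded uniformly over $x_i\in\mathcal{X}$ and $D_i\in[D_L,D_U]$ by Conditions~1--3. Therefore
\[
\left\|E_{\delta}\big[\partial_{\psi}\mathcal{D}_{\ga}(y_i;\psi_{\ast})\big]\right\|\ \le\ C\max_{k\in\{0,1,2\}}\ \sup_{x\in\mathcal{X}}\ \sup_{D\in[D_L,D_U]}\ \int f_{\delta}(y)\,\phi(y;x^\top\be_{\ast},A_{\ast}+D)^{\ga}\,|y-x^\top\be_{\ast}|^{k}\,dy\ =\ C\rho_{\psi_{\ast}}
\]
directly from the definition~(\ref{rho}), which gives $\|E_{f_{\psi_{\ast}}}[\partial_{\psi}\mathcal{D}_{\ga}(y_i;\psi_{\ast})]\|\le C\omega\rho_{\psi_{\ast}}$.

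It then remains to invert the Jacobian. By differentiability and domination (Conditions~(a) and~(e) of Lemma~\ref{lemma: psudo-psi-conv}) and a vector mean value theorem, $0=E_{f_{\psi_{\ast}}}[\partial_{\psi}\mathcal{D}_{\ga}(y_i;\psi_{\dagger}^{(\ga)})]=E_{f_{\psi_{\ast}}}[\partial_{\psi}\mathcal{D}_{\ga}(y_i;\psi_{\ast})]+\bar{J}\,(\psi_{\dagger}^{(\ga)}-\psi_{\ast})$, where $\bar{J}$ is the integral over the segment $[\psi_{\ast},\psi_{\dagger}^{(\ga)}]$ of $E_{f_{\psi_{\ast}}}[\partial_{\psi}^{2}\mathcal{D}_{\ga}(y_i;\,\cdot\,)]$. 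At $\psi_{\ast}$ the clean-component Hessian $E_{\psi_{\ast}}[\partial_{\psi}^{2}\mathcal{D}_{\ga}(y_i;\psi_{\ast})]$ is negative definite --- the content behind the identifiability step of Lemma~\ref{conv-mu_tau} and Condition~(d) of Lemma~\ref{lemma: psudo-psi-conv} specialized to the correctly specified model --- and the contamination perturbs it by only $O(\omega\rho_{\psi_{\ast}})$; invoking the hypothesis $\psi_{\ast},\psi_{\dagger}^{(\ga)}\in\Psi_m$ and the regularity on $\Psi_m$, one concludes $\bar{J}$ is invertible with $\|\bar{J}^{-1}\|$ bounded, whence $\|\psi_{\dagger}^{(\ga)}-\psi_{\ast}\|\le\|\bar{J}^{-1}\|\cdot C\omega\rho_{\psi_{\ast}}=O(\omega\rho_{\psi_{\ast}})$.

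The main obstacle is uniformity and non-circularity rather than any individual estimate. Unlike the i.i.d.\ single-model statement in \cite{fujisawa2008robust}, every bound here must hold uniformly in the covariate/variance pair $(x_i,D_i)$ --- precisely what the $\sup_{x}\sup_{D}$ in~(\ref{rho}) supplies --- and the Jacobian must be uniformly nonsingular over $\Psi_m$, not just at a single point. The truly delicate part is keeping the Jacobian step from being circular: one relies on the structural hypothesis $\psi_{\dagger}^{(\ga)}\in\Psi_m$ together with the regularity conditions, but without that hypothesis one would instead first localize $\psi_{\dagger}^{(\ga)}$ to an $O(\omega\rho_{\psi_{\ast}})$-ball around $\psi_{\ast}$ by a contraction/Brouwer argument on the Newton-type update built from the inverse of the clean-model Hessian at $\psi_{\ast}$, and then run the linearization above.
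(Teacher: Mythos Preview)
The paper does not actually prove this lemma: it is stated with attribution to Theorem~3.3 of \cite{fujisawa2008robust} and no argument is supplied beyond the citation. Your proposal is therefore not competing with a proof in the paper but rather reconstructing the Fujisawa--Eguchi argument in the present regression/heteroscedastic setting, and it does so correctly: the decomposition $E_{f_{\psi_{\ast}}}=(1-\omega)E_{\psi_{\ast}}+\omega E_{\delta}$, the Fisher consistency $E_{\psi_{\ast}}[\partial_{\psi}\mathcal{D}_{\ga}(y_i;\psi_{\ast})]=0$ (already established in Lemma~\ref{conv-mu_tau}), the bound on the contamination score via the definition~(\ref{rho}) of $\rho_{\psi_{\ast}}$, and the linearization with a nonsingular Jacobian are exactly the ingredients of that theorem. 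Your remark that the $\sup_{x}\sup_{D}$ in~(\ref{rho}) is what makes the score bound uniform in the nuisance $(x_i,D_i)$ is the one genuinely new observation needed to transfer the i.i.d.\ result of \cite{fujisawa2008robust} to this model, and your discussion of the Jacobian step --- using the hypothesis $\psi_{\ast},\psi_{\dagger}^{(\ga)}\in\Psi_m$ to control the contamination contribution to the Hessian along the segment, with a Brouwer/contraction localization as the fallback --- matches how that reference handles the same issue.
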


From Lemmas \ref{lemma: psudo-psi-conv} and \ref{lemma: latent-bias-bound}, we have  
\begin{align*}
   \psih_{m}^{(\gamma)} = \psi_{\ast} + O\left(\omega \rho_{\psi_{\ast}}\right) + O_{p}(m^{-1/2})
\end{align*}
and this enables us to evaluate coverage probabilities of the proposed confidence intervals $\widehat{I}_{i,1-\alpha}^{(\gamma)}$. 

Now we move on to the proof of Theorem \ref{contami-coverage}. Since 
  \begin{align*}
      \psih^{(\gamma)} = (\beh_{\gamma}, \Ah_{\gamma}) = \psi_{\ast} + O(\omega\rho_{\psi_{\ast}}) + O_{p}(1/\sqrt{m})
  \end{align*}
  as $m \to \infty$ from Lemmas \ref{lemma: psudo-psi-conv} and \ref{lemma: latent-bias-bound}, there exists a sequence of constants $\tilde{\epsilon}_{m} \to 0$ such that $\tilde{\epsilon}_m = m^{-\zeta/2} + (\omega \rho_{\psi_{\ast}})^{1/2} = O(m^{-\zeta/4})$ and $P(\Omega_{m}) \to 1$ as $m \to \infty$ where $\Omega_{m} = \{\max\{|\Ah_{\gamma} - A_{\ast}|, \|\beh_{\gamma} - \beta_{\ast}\|\} \leq \tilde{\epsilon}_{m}\}$. Observe that $\theta_i$ can be represented as $\theta_i = (1-B_i)G_i + B_i\Delta_i$, where $B_i$, $G_i$, and $\Delta_i$ are i.i.d. random variables such that $B_i$ and $\{G_i,\Delta_i\}$ are independent, $P(B_i=1)=\omega$ and $P(B_i=0)=1-\omega$, $G_i|x_i \sim N(x_i^\top \beta_{\ast},A_{\ast})$, and $\Delta_i$ is a random variable with the density function $\delta$. Likewise, $y_i$ can be represented as $y_i=\theta_i+\sqrt{D_i}Z_i$, where $Z_i \sim N(0,1)$. 
  From the same argument to show (\ref{thh-bonund-coverage}), we have  
 \begin{align*}
     &P\left(\th_i \leq \thh_{i}^{(\gamma)} + \sh_i^{(\gamma)} z_{\alpha/2}\right)  \\ 
     &= P\left(\left\{\th_i \leq \tht_i^{(\gamma)} + \st_i^{(\gamma)} z_{\alpha/2} + (\thh_{i}^{(\gamma)} - \tht_i^{(\gamma)}) + (\sh_i^{(\gamma)}-\st_i^{(\gamma)}) z_{\alpha/2}\right\}\cap \Omega_{m} \cap \{|y_{i} - x_i^\top\beta_{\ast}| \leq \tilde{\epsilon}_{m}^{-1/2}\}\right)\\
     &\quad + P(|y_{i} - x_i^\top \beta_{\ast}|>\tilde{\epsilon}_m^{-1/2}) + P(\Omega_m^{c}). 
 \end{align*}
 From Lemmas \ref{lemma: psudo-psi-conv} and \ref{lemma: latent-bias-bound}, we have
 \begin{align*}
     P(\Omega_{m}^{c}) &\leq P(\|\psih^{(\gamma)} -\psi_{\ast}\|>\tilde{\epsilon}_m)\\
     &\leq P(\|\psih^{(\gamma)} -\psi_{\dagger}^{(\gamma)}\|>\tilde{\epsilon}_m/2) + P(\|\psi_{\dagger}^{(\gamma)} -\psi_{\ast}\|>\tilde{\epsilon}_m/2)\\
     &\leq 4\tilde{\epsilon}_m^{-2}\left(E[\|\psih^{(\gamma)} -\psi_{\dagger}^{(\gamma)}\|^{2}] + \|\psi_{\dagger}^{(\gamma)} -\psi_{\ast}\|^{2}\right)\\
     &= O\left(\frac{1}{m(m^{-\zeta/2} + \omega\rho_{\psi_{\ast}})} + \frac{(\omega\rho_{\psi_{\ast}})^{2}}{m^{-\zeta/2} + \omega\rho_{\psi_{\ast}}}\right)\\
     &= O\left(\frac{1}{m\omega\rho_{\psi_{\ast}}} + \omega\rho_{\psi_{\ast}}\right) = O(\omega \rho_{\psi_{\ast}}). 
 \end{align*}
 Further, we have
 \begin{align*}
     P(|y_{i} - x_i^\top \beta_{\ast}|>\tilde{\epsilon}_m^{-1/2}) &= P\left(\left. \left\{|y_{i} - x_i^\top \beta_{\ast}|>\tilde{\epsilon}_m^{-1/2}\right\}\right|B_i=0\right)P(B_i=0)\\
     &\quad + P\left(\left. \left\{|y_{i} - x_i^\top \beta_{\ast}|>\tilde{\epsilon}_m^{-1/2}\right\} \right|B_i=1\right)P(B_i=1)\\
     &\leq P\left(|G_{i}- x_i^\top\beta_{\ast} + \sqrt{D_i}Z_i|>\tilde{\epsilon}_m^{-1/2}\right)(1-\omega) + \tilde{\epsilon}_{m}E_{\delta}[|y_i-x_i^\top\beta_{\ast}|^{2}]\omega\\
     &\leq \tilde{\epsilon}_m^{1/2}\left(E\left[|G_i-x_i^\top \beta_{\ast}|\right] + \sqrt{D_{U}}E\left[|Z_i|\right]\right)(1-\omega) + O(\omega\tilde{\epsilon}_m)\\ 
     &= O(\tilde{\epsilon}_{m}^{1/2}).
 \end{align*}
 Then we have
 \begin{align}
     &P\left(\th_i \leq \thh_{i}^{(\gamma)} + \sh_i^{(\gamma)} z_{\alpha/2}\right) \nonumber \\ 
     &= P\left(\left\{\th_i \leq \tht_i^{(\gamma)} + \st_i^{(\gamma)} z_{\alpha/2} + (\thh_{i}^{(\gamma)} - \tht_i^{(\gamma)}) + (\sh_i^{(\gamma)}-\st_i^{(\gamma)}) z_{\alpha/2}\right\}\cap \Omega_{m} \cap \{|y_{i} - x_i^\top\beta_{\ast}| \leq \tilde{\epsilon}_{m}^{-1/2}\}\right) \nonumber \\
     &\quad + O(\tilde{\epsilon}_{m}^{1/2}) + O(\omega \rho_{\psi_{\ast}}). \label{thh-bonund-coverage-2}
 \end{align}
 On $\Omega_{m} \cap \{|y_{i} - x_i^\top\beta_{\ast}| \leq \tilde{\epsilon}_{m}^{-1/2}\}$, we can show 
 \begin{align*}
     |\thh_{i}^{(\gamma)} - \tht_i^{(\gamma)}| &\leq C'_{1}(D_L, D_U,\psi_{\ast})\tilde{\epsilon}_{m}^{1/2},\\ |\sh_i^{(\gamma)}-\st_i^{(\gamma)}| &\leq C'_{1}(D_L, D_U,\psi_{\ast})\tilde{\epsilon}_{m}^{1/2}
 \end{align*}
 for $\gamma \in [\underline{\gamma},\overline{\gamma}]$ and some positive constant $C'_{1}(D_L, D_U,\psi_{\ast})$. Then we have
 \begin{align}
    &P\left(\left\{\th_i \leq \tht_i^{(\gamma)} + \st_i^{(\gamma)} z_{\alpha/2} + (\thh_{i}^{(\gamma)} - \tht_i^{(\gamma)}) + (\sh_i^{(\gamma)}-\st_i^{(\gamma)}) z_{\alpha/2}\right\}\cap \Omega_{m} \cap \{|y_{i} - x_i^\top\beta_{\ast}| \leq \tilde{\epsilon}_{m}^{-1/2}\}\right) \nonumber \\
    &\leq P\left(\th_i \leq \tht_i^{(\gamma)} + \st_i^{(\gamma)} z_{\alpha/2} + C'_{1}(D_L, D_U,\psi_{\ast})\tilde{\epsilon}_{m}^{1/2}\right)\nonumber \\
    &=P\left(\left. \th_i \leq \tht_i^{(\gamma)} + \st_i^{(\gamma)} z_{\alpha/2} + C'_{1}(D_L, D_U,\psi_{\ast})\tilde{\epsilon}_{m}^{1/2}\right|B_i=0\right)P(B_i=0) \nonumber \\
    &\quad + P\left(\left. \th_i \leq \tht_i^{(\gamma)} + \st_i^{(\gamma)} z_{\alpha/2} + C'_{1}(D_L, D_U,\psi_{\ast})\tilde{\epsilon}_{m}^{1/2}\right|B_i=1\right)P(B_i=1) \nonumber \\
    &=: P_{1}(1-\omega) + P_{2}\omega \label{P1-P2-decomp}
 \end{align}
 For $P_1$, we have
 \begin{align}
     P_1 &= P\left(\left. \th_i \leq \tht_i + \st_i z_{\alpha/2} + (\tht_i^{(\gamma)}-\tht_i) + (\st_i^{(\gamma)} - \st_i) z_{\alpha/2} + C'_{1}(D_L, D_U,\psi_{\ast})\tilde{\epsilon}_{m}^{1/2}\right|B_i=0\right) \nonumber \\
     &= E\left[P( \th_i \leq \tht_i + \st_i z_{\alpha/2} + (\tht_i^{(\gamma)}-\tht_i) + (\st_i^{(\gamma)} - \st_i) z_{\alpha/2} \right. \nonumber \\ 
     &\left. \quad \quad \quad + C'_{1}(D_L, D_U,\psi_{\ast})\tilde{\epsilon}_{m}^{1/2}|y_i,x_i,D_i,B_i=0)|B_i=0\right] \nonumber \\
     &\leq E\left[P( \th_i \leq \tht_i + \st_i z_{\alpha/2}  |y_i,x_i,D_i,B_i=0) \right. \nonumber \\
     &\left. \quad + 2|\tht_i^{(\gamma)}-\tht_i| + |\st_i^{(\gamma)} - \st_i| z_{\alpha/2}+2C'_{1}(D_L, D_U,\psi_{\ast})\tilde{\epsilon}_{m}^{1/2}|B_i=0\right] \nonumber \\
     &= 1-\alpha/2 + 2E_{\psi_{\ast}}\left[|\tht_i^{(\gamma)}-\tht_i| + |\st_i^{(\gamma)} - \st_i| z_{\alpha/2}\right] + 2C'_{1}(D_L, D_U,\psi_{\ast})\tilde{\epsilon}_{m}^{1/2} \label{P1-bound}
 \end{align}
 by Lemma \ref{Nazarov-ineq}.
 Note that 
 \begin{align*}
     \tht_i^{(\gamma)} - \tht_i &= \frac{D_i}{A_{\ast} + D_i}\left(\phi(y_{i};x_i^\top \beta_{\ast},A_{\ast}+D_i)^{\gamma}c_{\gamma}(A_{\ast})-1\right)(y_i - x_i^\top \beta_{\ast}),\\
     \st_i^{2(\gamma)} - \st_i^{2} &= \frac{D_i^2}{A_{\ast} + D_i}\left\{\left(1- \phi(y_{i};x_i^\top \beta_{\ast},A_{\ast}+D_i)^{\gamma}c_{\gamma}(A_{\ast})\right)\right. \\
     &\left. \quad \quad + \frac{\gamma(y_i-x_i^\top \beta_{\ast})^{2}}{A_{\ast}+D_i}\phi(y_{i};x_i^\top \beta_{\ast},A_{\ast}+D_i)^{\gamma}c_{\gamma}(A_{\ast})\right\},
 \end{align*}
 and 
 \begin{align*}
     |\st_i^{(\gamma)} - \st_i| &\leq \st_i\left|\frac{\st_i^{(\gamma)}}{\st_i}-1\right|\left|\frac{\st_i^{(\gamma)}}{\st_i}+1\right| = \frac{1}{\st_i}\left|\st_i^{2(\gamma)} - \st_i^{2}\right| \leq \frac{A_{\ast}+D_U}{D_L}\left|\st_i^{2(\gamma)} - \st_i^{2}\right|.
 \end{align*}
 Since 
 \begin{align*}
     \frac{\partial }{\partial \gamma}\phi(y_{i};x_i^\top \beta_{\ast},A_{\ast}+D_i)^{\gamma} &= -\frac{1}{2}\phi(y_{i};x_i^\top \beta_{\ast},A_{\ast}+D_i)^{\gamma}\left(\log(2\pi(A_{\ast}+D_i)) + \frac{(y_i - x_i^\top \beta_{\ast})^{2}}{A_{\ast}+D_i}\right),\\
     \frac{\partial }{\partial \gamma}c_{\gamma}(A_{\ast}) &= c_{\gamma}(A_{\ast})\frac{\gamma(\gamma+2)}{2(1+\gamma)^{2}}\log(2\pi(A_{\ast}+D_i)),
 \end{align*}
 we have
 \begin{align}
     |\tht_i^{(\gamma)} - \tht_i| &\leq K_{1}(D_L, D_U,\psi_{\ast})\left(1 + \frac{(y_i - x_i^\top \beta_{\ast})^{2}}{A_{\ast}+D_i}\right)(1+|y_i - x_i^\top \beta_{\ast}|)\gamma,  \label{th-g-error} \\
     |\st_i^{2(\gamma)} - \st_i^{2}| &\leq K_{1}(D_L, D_U,\psi_{\ast})\left(1 + \frac{(y_i - x_i^\top \beta_{\ast})^{2}}{A_{\ast}+D_i}\right)(1+|y_i - x_i^\top \beta_{\ast}|)\gamma \label{s-g-error}
 \end{align}
 for $\gamma \in [\underline{\gamma},\overline{\gamma}]$ and some positive constant $K_{1}(D_L, D_U, \psi_{\ast})$. Combining (\ref{P1-bound}), (\ref{th-g-error}), and (\ref{s-g-error}), we have 
 \begin{align}\label{P1-bound-2}
     P_1 &\leq 1-\alpha/2 + O(\gamma) + 2C'_{1}(D_L, D_U,\psi_{\ast})\tilde{\epsilon}_{m}^{1/2}.
 \end{align}
 Define $\Omega_{2,m} = \{\phi(y_{i};x_i^\top \beta_{\ast},A_{\ast}+D_i)^{\gamma}|y_i-x_i^\top \beta_{\ast}|^{k}\leq \rho_{\psi_{\ast}}^{1/2}, k=0,1,2\}$. On $\Omega_{2,m}$, we have
 \begin{align*}
     \tht_i^{(\gamma)} &\leq  \theta_i + \sqrt{D_i} Z_i + \frac{D_i}{A_{\ast}}C_{\gamma}(A_{\ast})\rho_{\psi_{\ast}}^{1/2},\ \st_i^{2(\gamma)} \leq D_i\left(1 + \frac{D_i}{A'_{\ast}}(\gamma + 1)C_{\gamma}(A_{\ast})\rho_{\psi_{\ast}}^{1/2}\right)
 \end{align*}
 where $A'_{\ast} = \min\{A_\ast, A_\ast^2\}$. Then we have 
 \begin{align*}
     P_{2} &= P\left(\left\{\left. \th_i \leq \tht_i^{(\gamma)} + \st_i^{(\gamma)} z_{\alpha/2} + C'_{1}(D_L, D_U,\psi_{\ast})\tilde{\epsilon}_{m}^{1/2}\right\}\cap \Omega_{2,m}\right|B_i=1\right)\\
     &\quad + P\left(\left\{\left. \th_i \leq \tht_i^{(\gamma)} + \st_i^{(\gamma)} z_{\alpha/2} + C'_{1}(D_L, D_U,\psi_{\ast})\tilde{\epsilon}_{m}^{1/2}\right\}\cap \Omega_{2,m}^{c}\right|B_i=1\right)\\
     &=: P_{2,1} + P_{2,2}.
 \end{align*}
 For $P_{2,1}$, we have
 \begin{align}
     &P_{2,1} \nonumber \\
     &\leq P\left( \theta_i \leq \theta_i + \sqrt{D_i}Z_i + \frac{D_i}{A_{\ast}}C_{\gamma}(A_{\ast})\rho_{\psi_{\ast}}^{1/2} + \sqrt{D_i}\left(1 + \frac{D_i}{A'_{\ast}}(\gamma + 1)C_{\gamma}(A_{\ast})\rho_{\psi_{\ast}}^{1/2}\right)^{1/2}z_{\alpha/2} \right. \nonumber \\
     &\left. \quad \quad + C'_{1}(D_L, D_U,\psi_{\ast})\tilde{\epsilon}_{m}^{1/2}| B_i=1 \right) \nonumber\\
     &= P\left( -\sqrt{D_i} Z_{i} \leq \frac{D_i}{A_{\ast}}C_{\gamma}(A_{\ast})\rho_{\psi_{\ast}}^{1/2} + \sqrt{D_i}\left(1 + \frac{D_i}{A'_{\ast}}(\gamma + 1)C_{\gamma}(A_{\ast})\rho_{\psi_{\ast}}^{1/2}\right)^{1/2}z_{\alpha/2}\right. \nonumber \\
     &\left. \quad \quad + C'_{1}(D_L, D_U,\psi_{\ast})\tilde{\epsilon}_{m}^{1/2}| B_i=1 \right) \nonumber\\ 
     &= P\left(-Z_{i} \leq \frac{\sqrt{D_i}}{A_{\ast}}C_{\gamma}(A_{\ast})\rho_{\psi_{\ast}}^{1/2} + \left(1 + \frac{D_i}{A'_{\ast}}(\gamma + 1)C_{\gamma}(A_{\ast})\rho_{\psi_{\ast}}^{1/2}\right)^{1/2}z_{\alpha/2} \right. \nonumber \\
     &\left. \quad \quad + C'_{1}(D_L, D_U,\psi_{\ast})\tilde{\epsilon}_{m}^{1/2}| B_i=1 \right) \nonumber\\
     &\leq P\left(\!-Z_{i} \leq \!\left(\!1 + \frac{D_U}{A'_{\ast}}(\gamma + 1)C_{\gamma}(A_{\ast})\rho_{\psi_{\ast}}^{1/2}\right)^{1/2}\!\!\!\!\!\!\! z_{\alpha/2}\!\right)\! + \frac{2\sqrt{D_{U}}}{A_{\ast}}C_{\gamma}(A_{\ast})\rho_{\psi_{\ast}}^{1/2} + 2C'_{1}(D_L, D_U,\psi_{\ast})\tilde{\epsilon}_{m}^{1/2} \nonumber \\
     &\leq P\left(-Z_{i} \leq \left(1 + \frac{D_U}{A'_{\ast}}(\gamma + 1)C_{\gamma}(A_{\ast})\rho_{\psi_{\ast}}^{1/2}\right)z_{\alpha/2}\right) + \frac{2\sqrt{D_{U}}}{A_{\ast}}C_{\gamma}(A_{\ast})\rho_{\psi_{\ast}}^{1/2} + 2C'_{1}(D_L, D_U,\psi_{\ast})\tilde{\epsilon}_{m}^{1/2} \nonumber \\  
     &\leq P\left(-Z_{i} \leq z_{\alpha/2}\right) + \frac{2\sqrt{D_U}}{A'_{\ast}}C_{\gamma}(A_{\ast})\left((\gamma + 1)z_{\alpha/2} +1\right)\rho_{\psi_{\ast}}^{1/2} + 2C'_{1}(D_L, D_U,\psi_{\ast})\tilde{\epsilon}_{m}^{1/2} \nonumber  \\
     &\leq 1-\alpha/2 + O(\rho_{\psi_{\ast}}^{1/2} + \tilde{\epsilon}_{m}^{1/2}). \label{P21-bound}
 \end{align}
 For the second and fourth inequalities, we used Lemma \ref{Nazarov-ineq}. 
 
 For $P_{2,2}$, from the definition of $\rho_{\psi_{\ast}}$, we have 
 \begin{align}
     P_{2,2} &\leq P(\Omega_{2,m}^{c}|B_i=1) \nonumber \\
     & = E[P(\Omega_{2,m}^{c}|x_i,D_i, B_i=1)|B_i=1] \nonumber \\
     &\leq \rho_{\psi_{\ast}}^{-1/2}\max_{k \in \{0,1,2\}}\sup_{D_i \in [D_L, D_U]}\sup_{x_i \in \mathcal{X}}E_{\delta}[\phi(y_{i};x_i^\top \beta_{\ast},A_{\ast}+D_i)^{\gamma}|y_i-x_i^\top \beta_{\ast}|^{k}|x_i,D_i] \nonumber \\
     &\leq \rho_{\psi_{\ast}}^{1/2}.\label{P22-bound}
 \end{align}
 Combining (\ref{P21-bound}) and (\ref{P22-bound}), we have
 \begin{align}\label{P2-bound}
     P_{2} &\leq 1-\alpha/2 + O(\rho_{\psi_{\ast}}^{1/2}+\tilde{\epsilon}_{m}^{1/2}).
 \end{align}
 Combining (\ref{thh-bonund-coverage-2}), (\ref{P1-P2-decomp}), (\ref{P1-bound-2}), and (\ref{P2-bound}), we have 
 \begin{align*}
     P\left(\th_i \leq \thh_{i}^{(\gamma)} + \sh_i^{(\gamma)} z_{\alpha/2}\right)&\leq 1-\alpha/2 + O(\gamma) + O((\omega \rho_{\psi_{\ast}})^{1/4}).
 \end{align*}
 Likewise, we have $P\left(\th_i \leq \thh_{i}^{(\gamma)} + \sh_i^{(\gamma)} z_{\alpha/2}\right) \geq 1-\alpha/2 - O\left((\omega \rho_{\psi_{\ast}})^{1/4}\right) - O(\gamma)$. 
 Almost the same argument yields $P\left(\th_i \leq \thh_{i} - \sh_i z_{\alpha/2}\right) = \alpha/2 + O\left((\omega \rho_{\psi_{\ast}})^{1/4}\right) + O(\gamma)$.
 Therefore, we have
 \begin{align*}
     P\left(\th_i \in \widehat{I}_{i,1-\alpha}^{(\gamma)}\right) = 1-\alpha + O\left((\omega \rho_{\psi_{\ast}})^{1/4} + \gamma \right). 
 \end{align*}

\section{Additional results}\label{sec:sim-add}

\subsection{Further discussion on coverage accuracy in Theorem \ref{contami-coverage}}

Under the same assumptions of Theorem \ref{contami-coverage}, it is possible to show 
\begin{align}
    P\left(\theta_i \in \widehat{I}_{i,1-\alpha}^{(\gamma)}\right) & \geq (1-\omega)P_{\psi_{\ast}}\left(r_{i,-}^{(\gamma)}  -z_{\alpha/2} \leq \frac{\th_i - \tht_i}{\st_i} \leq z_{\alpha/2} + r_{i,+}^{(\gamma)} \right) \nonumber \\
    &\quad + \omega(1-\alpha) + O((\omega \rho_{\psi_{\ast}})^{1/4}), \label{coverage-discuss}
\end{align}
where $P_{\psi_{\ast}}(\cdot)$ denotes the probability with respect to $y_i \sim N(x_i^\top \beta_{\ast}, A_{\ast}+D_i)$ and 
\begin{align*}
    r_{i,-}^{(\gamma)} &= \frac{(\tht_i^{(\gamma)} - \tht_i) - (\st_i^{(\gamma)}-\st_i)z_{\alpha/2}}{\st_i},\ r_{i,+}^{(\gamma)} = \frac{(\tht_i^{(\gamma)} - \tht_i) + (\st_i^{(\gamma)}-\st_i)z_{\alpha/2}}{\st_i}.
\end{align*}
Note that 
\begin{align*}
    &\frac{\partial}{\partial \gamma}\phi(y_i;x_i^\top\beta_{\ast},A_{\ast}+D_i)^{\gamma}c_{\gamma}(A_{\ast})\\ 
    &= -\frac{1}{2}\phi(y_i;x_i^\top\beta_{\ast},A_{\ast}+D_i)^{\gamma}c_{\gamma}(A_{\ast})\left(\frac{1}{(1+\gamma)^{2}}\log(2\pi(A_{\ast}+D_i)) + \frac{(y_i - x_i^\top\beta_{\ast})^{2}}{A_{\ast}+D_i}\right)
\end{align*}
and 
\begin{align*}
\st_i^{2(\gamma)} - \st_i^{2} 
&= \frac{D_i^2}{A_{\ast} + D_i}\bigg\{\left(1- \phi(y_{i};x_i^\top\beta_{\ast},A_{\ast}+D_i)^{\gamma}c_{\gamma}(A_{\ast})\right)\\
& \ \ \ \ 
+ \frac{\gamma(y_i-x_i^\top\beta_{\ast})^{2}}{A_{\ast}+D_i}\phi(y_{i};x_i^\top\beta_{\ast},A_{\ast}+D_i)^{\gamma}c_{\gamma}(A_{\ast})\bigg\}.
\end{align*}
Thus, if we also assume $A_{\ast} + D_L>(2\pi)^{-1}$, we have $\frac{\partial}{\partial \gamma}\phi(y_i;x_i^\top\beta_{\ast},A_{\ast}+D_i)^{\gamma}c_{\gamma}(A_{\ast})<0$ for $\gamma>0$ and this implies $\st_i^{(\gamma)} - \st_i >0$ for $\gamma>0$. Moreover, in the proof of Theorem \ref{contami-coverage}, the first term in the right hand side of (\ref{coverage-discuss}) can be evaluated as $(1-\omega)(1-\alpha + O(\gamma))$.

\subsection{Additional simulation results}
Using simulation studies with the same settings in Section~\ref{sec:sim}, we here demonstrate that the empirical coverage probabilities of the proposed confidence intervals tend to be larger than the nominal ones when the first model (\ref{model}) is correctly specified and $\gamma$ is positive. 
Based on 2000 Monte Carlo replications, we computed CP and AL of the proposed confidence interval with each fixed value of $\gamma\in \{0, 0.1, 0.2, 0.3\}$.
The results are presented in Table \ref{tab:sim-add}, which clearly shows that both CP and AL tends to increase with $\gamma$, that is, the interval gets more inefficient by using a larger value of $\gamma$. 
This phenomena is related to the trade-off between efficiency under correct specification and robustness under contamination of the $\gamma$-divergence with positive $\gamma$.

\begin{table}[!htbp]
\caption{Empirical coverage probability (CP) and average length (AL) of the $95\%$ confidence intervals produced by the proposed GD method with $\gamma\in\{0, 0.1, 0.2, 0.3\}$ under Scenario (i) in Section~\ref{sec:sim}.  
\label{tab:sim-add}
}
\begin{center}
\begin{tabular}{ccccccccccccccc}
\hline
& \multicolumn{4}{c}{CP} & & \multicolumn{4}{c}{AL}\\
$\gamma$ & $0$ & $0.1$ & $0.2$ & $0.3$ &  & $0$ & $0.1$ & $0.2$ & $0.3$ \\
 \hline
$A=1$ & 93.6 & 96.0 & 96.6 & 96.7 &  & 2.54 & 2.90 & 3.12 & 3.26 \\
$A=0.5$ & 92.0 & 96.4 & 97.2 & 97.3 &  & 2.03 & 2.56 & 2.85 & 3.05 \\
\hline
\end{tabular}
\end{center}
\end{table}

\bigskip
\noindent
{\bf {\Large References}}

\vspace{0.5cm}
\noindent
Chernozhukov, V., D. Chetverikov, and K. Kato (2017). 
Central limit theorems and bootstrap in high dimensions. {\it The Annals of Probability} 45, 2309-2352.

\vspace{0.5cm}
\noindent
Fujisawa,  H.  (2013).
Normalized  estimating  equation  for  robust  parameter  estimation.
{\it Electronic Journal of Statistics} 7, 1587-1606.

\vspace{0.5cm}
\noindent
Fujisawa, H. and S. Eguchi (2008). Robust parameter estimation with a small bias against heavy contamination. {\it Journal of Multivariate Analysis} 99(9), 2053-2081.

\vspace{0.5cm}
\noindent
Nazarov,  F. (2003). 
On the maximal perimeter of a convex set in $\mathbb{R}^{m}$ with respect to a gaussian measure.
{\it Geometric Aspects of Functional Analysis} 1807, 169-187.

\vspace{0.5cm}
\noindent
van der Vaart, A. W. (1998). {\it Asymptotic Statistics}, Cambridge University Press.


\end{document}